\definecolor{myblue}{rgb}{0.1,0.1,0.5}
\definecolor{mygreen}{rgb}{0.00,0.26,0.15}
\definecolor{goldenpoppy}{rgb}{0.99, 0.76, 0.0}
\definecolor{iris}{rgb}{0.35, 0.31, 0.81}
\tikzstyle{smallvertex}=[circle, draw=black, fill=white, inner sep=2pt]
\newcommand{\letter}[1]{\@alph{#1}}
\definecolor{shadecolor}{gray}{0.75}
\newtheorem{observation}[lemma]{Observation}
\newtheorem{claim}[lemma]{Claim}
\theoremstyle{definition}
\newtheorem{example}[lemma]{Example}
\newtheorem{remark}[lemma]{Remark}
\newtheorem{definition}[lemma]{Definition}
\newlength{\RoundedBoxWidth}
\newsavebox{\GrayRoundedBox}
\newenvironment{GrayBox}[1]%
   {\setlength{\RoundedBoxWidth}{.93\textwidth}
    \def\boxheading{#1}
    \begin{lrbox}{\GrayRoundedBox}
       \begin{minipage}{\RoundedBoxWidth}}%
   {   \end{minipage}
    \end{lrbox}
    \begin{center}
    \begin{tikzpicture}%
       \node(Text)[draw=black!20,fill=white,rounded corners,%
             inner sep=2ex,text width=\RoundedBoxWidth]%
             {\usebox{\GrayRoundedBox}};
        \coordinate(x) at (current bounding box.north west);
        \node [draw=white,rectangle,inner sep=3pt,anchor=north west,fill=white] 
        at ($(x)+(6pt,.75em)$) {\boxheading};
    \end{tikzpicture}
    \end{center}}     
\newenvironment{defproblemx}[2][]{\noindent\ignorespaces%
                                \FrameSep=6pt%
                                \parindent=0pt%
                \vspace*{-1.5em}
                \ifthenelse{\isempty{#1}}{%
                  \begin{GrayBox}{\textsc{#2}}%
                }{%
                  \begin{GrayBox}{\textsc{#2}  parameterized by~{#1}}%
                }
                \begin{tabular*}{\textwidth}{@{\hspace{.1em}} >{\itshape} p{1.8cm} p{0.8\textwidth} @{}}%
            }{
                \end{tabular*}%
                \end{GrayBox}%
                \ignorespacesafterend
            }
\newcommand{\FPT}{\textsf{FPT}}
\newcommand{\NPH}{\textsf{NP}-hard}
\newcommand{\card}[1]{\ensuremath{{\vert {#1} \vert }}}
\newcommand{\ca}[1]{\ensuremath{\mathcal{#1} }}
\newcommand{\set}[1]{\ensuremath{\left\{ {#1} \right\}}}
\newcommand{\fn}[3]{\ensuremath{{{#1} : {#2} \rightarrow {#3}}}}
\newcommand{\cO}{\mathcal{O}}
\newcommand{\poly}{\ensuremath{\text{poly}}}
\mathchardef\mh="2D
\newcommand{\cc}{$c$-closed}
\newcommand{\cl}{\ensuremath{\text{cl}}}
\newcommand{\lt}{\ensuremath{\Lambda}} %
\newcommand{\pngt}{\ensuremath{\Delta_0}} 
\newcommand{\pone}{\ensuremath{\Delta_1}} 
\newcommand{\ptwo}{\ensuremath{\Delta_2}} 
\newcommand{\pcq}{\ensuremath{\Delta}} 
\newcommand{\ppx}{\ensuremath{(\pcq, k + 1)}} 
\newcommand{\pdc}{\ensuremath{(\pcq, k)}} 
\newcommand{\tc}{\ensuremath{(\pngt, \pone, \ptwo, c)}} %
\newcommand{\tg}{\ensuremath{(\pngt, \pone, \ptwo, \gamma)}} 
\newcommand{\tp}{\ensuremath{(\pngt, \pone, \ptwo)}} 
\newcommand{\nbdbound}{\ensuremath{c - 1 + 2\eta(\pcq - \pone)}}
\newcommand{\nbdboundplus}{\ensuremath{c - 1 + 2\eta(\pcq + 1 - \pone)}}
\newcommand{\weaknbdboundplus}{\ensuremath{\gamma - 1 + 2\eta(\pcq + 1 - \pone)}}
\newcommand{\mcbound}{\ensuremath{3 \cdot 2^{\nbdboundplus} \cdot n^2 \cdot \lt}} 
\newcommand{\weakmcbound}{\ensuremath{3 \cdot 2^{\weaknbdboundplus} \cdot n^2 \cdot \lt}} 
\newcommand{\mplexbound}{\ensuremath{4 \cdot 2^{\nbdboundplus} \cdot n^{\max\set{2k, k + 2}} \cdot \lt}}
\newcommand{\weakmplexbound}{\ensuremath{4 \cdot 2^{\weaknbdboundplus} \cdot n^{\max\set{2k, k + 2}} \cdot \lt}}
\newcommand{\mdcbound}{\ensuremath{4 \cdot 2^{\nbdboundplus} \cdot n^{k + 2} \cdot \lt}}
\newcommand{\weakmdcbound}{\ensuremath{4 \cdot 2^{\weaknbdboundplus} \cdot n^{k + 2} \cdot \lt}}
\title{Temporal Triadic Closure: Finding Dense Structures in \\ Social Networks That Evolve}
\author {
    Tom Davot, 
    Jessica Enright, 
    Jayakrishnan Madathil, 
    Kitty Meeks 
}
\begin{document}

\maketitle

\begin{abstract}
A graph $G$ is $c$-closed if every two vertices with at least $c$ common neighbors are adjacent to each other. Introduced by Fox, Roughgarden, Seshadhri, Wei and Wein [ICALP 2018, SICOMP 2020], this definition is an abstraction of the triadic closure property exhibited by many real-world social networks, namely, friends of friends tend to be friends themselves. Social networks, however, are often temporal rather than static---the connections  change over a period of time. And hence temporal graphs, rather than static graphs, are often better suited to model social networks. Motivated by this, we introduce a definition of temporal $c$-closed graphs, in which if two vertices $u$ and $v$ have at least $c$ common neighbors during a short interval of time, then $u$ and $v$ are adjacent to each other around that time. Our pilot experiments show that several real-world temporal networks are $c$-closed for rather small values of $c$. We also study the computational problems of enumerating maximal cliques and similar dense subgraphs in temporal $c$-closed  graphs; a clique in a temporal graph is a subgraph that lasts for a certain period of time, during which every possible edge in the subgraph becomes active often enough, and other dense subgraphs are defined similarly. We bound the number of such maximal dense subgraphs  in a temporal $c$-closed graph that evolves slowly, and thus show that the corresponding enumeration problems admit efficient algorithms; by slow evolution, we mean that between consecutive time-steps, the local change in adjacencies remains small. Our work also adds to a growing body of literature on defining suitable structural parameters for temporal graphs that can be leveraged to design efficient algorithms. 
\end{abstract}

%

\section{Introduction} 
   
Social networks evolve.  Influencers gain and lose followers on social media; 
ants in a colony guide each other to food; scientists collaborate with their peers. All these examples involve networks in which  connections are created and destroyed as time passes. Moreover, even when a relationship within a network is  continuous, the interactions that provide evidence of that relationship---such as being in the same location as a friend, or exchanging an email---only happen at some points in time.  All these considerations mean that temporal information must be taken into account to gain a full understanding of many social networks.  Temporal graphs provide a useful formalism for modeling these evolving networks; in this work we adopt the widely-used model of~\citeauthor{DBLP:journals/jcss/KempeKK02}~\shortcite{DBLP:journals/jcss/KempeKK02}, in which a temporal graph $\ca{G}$ consists of a static graph $G$ called the \emph{underlying graph} or the \emph{footprint} of $\ca{G}$, together with a function (which we will denote by $\lambda$ in this paper) assigning to each edge a (finite) subset of $\mathbb{N}$ representing the discrete timesteps at which the edge  appears in the graph or is \emph{active}.  We can equivalently consider a temporal graph to be a sequence of static graphs or \emph{snapshots}, where the $t^{th}$ snapshot contains all vertices and only those edges that are active at time $t$. 

Modeling social networks in this way, however, introduces a new level of algorithmic challenge.  Even problems that are polynomial-time solvable on static graphs often become \NPH\ when a temporal dimension is added; in some cases this holds even when very strong restrictions are placed on the footprint of the graph, such as requiring it to be a path or a star \cite{DBLP:journals/jcss/MertziosMNZZ23,DBLP:journals/jcss/AkridaMSR21}.  This has prompted significant recent interest in the design of temporal graph parameters that additionally take into account temporal information, with the hope of identifying situations in which natural problems admit efficient algorithms. None of these, however, seems particularly well suited for the design of efficient algorithms to solve problems on social networks.  Some only offer a limited benefit over restricting the structure of the footprint: the \emph{timed feedback vertex number} \cite{DBLP:journals/algorithmica/CasteigtsHMZ21}, \emph{temporal feedback edge/connection number} \cite{DBLP:journals/dam/HaagMNR22}, and the various temporal analogues of treewidth \cite{DBLP:conf/birthday/FluschnikMNRZ20} are guaranteed to take small values when the footprint is a tree, so they cannot hope to give tractability for those problems that remain \NPH\ under this (or a stronger) restriction.  Restricting others, such as the \emph{vertex-interval-membership-width} \cite{DBLP:journals/algorithmica/BumpusM23} has proved more widely useful in making problems tractable, but this success comes at a price: the vertex-interval-membership-width of a temporal graph representing a social network will only be small if, at every timestep, most individuals have either (i) never yet formed any connections, or (ii) will never again form a connection, an assumption that seems unrealistic for the vast majority of social networks. 

In general, most of the temporal graph parameters defined so far require that the graph be sparse (in the sense of having only a linear number of edges active) at every timestep. 
Very recent work \cite{DBLP:journals/corr/abs-2404-19453} has tried to address this limitation by introducing temporal analogues of the parameter cliquewidth, modular width and neighborhood diversity, which can take small values on dense temporal graphs that are sufficiently highly structured, but these are not without their downsides.  Computing the temporal cliquewidth of a temporal graph is an \NPH\ problem in itself \cite{DBLP:journals/corr/abs-2404-19453}, making it hard to gain intuition about which real-world networks, if any, might have small values of this parameter.  Meanwhile, for the temporal modular width or temporal neighborhood diversity of a social network to be small there must be large groups of vertices between which the interactions are uniform (i.e. every member of each group has identical interactions with each member of the other group at every time-step), which again does not seem a credible property for many social networks.

In this paper we introduce a new structural parameter for temporal graphs, namely the notion of temporal \cc\ graphs. This is an adaptation of static \cc\ graphs introduced by \citeauthor{DBLP:journals/siamcomp/FoxRSWW20}~\shortcite{DBLP:conf/icalp/FoxRSWW18,DBLP:journals/siamcomp/FoxRSWW20}; we discuss the usefulness of these graphs as a model for social networks, and present some preliminary results on real-world networks in in Section~\ref{sec:definitions}.  We also show that in order to replicate many of the algorithmic results about static \cc\ graphs we need to impose further restrictions, and in Section~\ref{sec:stability} we introduce an additional parameter that captures the extent to which the network changes locally between one timestep and the next.  In Sections~\ref{sec:bounds} and ~\ref{sec:bounds-more}, we present our main theoretical results---bounds for the number of maximal cliques and similar dense subgraphs in a temporal \cc\ graph that evolves sufficiently slowly, which imply the existence of efficient algorithms to find all such subgraphs. %

\paragraph*{Terminology and notation.} For a (static) graph $G$, we use $V(G)$ and $E(G)$ to denote the vertex set and edge set of $G$, respectively. A temporal graph $\ca{G}$ is a pair $(G, \lambda)$, where $G$ is a static graph and the function $\fn{\lambda}{E(G)}{2^{\mathbb{N}}}$ specifies the discrete time-steps at which each edge $e$ of $G$ is active. 
We assume throughout that $\lambda(e)$ is finite. We also assume that the lifespan of a temporal graph $\ca{G}$ starts at time-step $1$. We use $\lt_{\ca{G}}$ (or simply $\lt$ when $\ca{G}$ is clear) to denote the maximum time-step at which any edge is active, and call $\lt_{\ca{G}}$ the lifetime of $\ca{G}$. 
We call $G$ the footprint or the underlying graph of $\ca{G}$. 
For $\ca{G} = (G, \lambda)$ and a vertex $v \in V(G)$, we use $\ca{G} - v$ to denote the temporal graph obtained from $\ca{G}$ by deleting $v$, i.e., $\ca{G} - v = (G - v, \lambda')$, where $G - v$ is the subgraph of $G$ induced by $V(G) \setminus \set{v}$ and $\lambda'$ is the restriction of $\lambda$ to $E(G - v)$. 
      By a time-interval $I \subseteq \mathbb{N}$ we mean a set of consecutive time-steps, i.e., $I = [a, b] = \set{a, a + 1, a + 2,\ldots, b}$ for some $a, b \in \mathbb{N}$, where $a \leq b$. The length of the time-interval $[a, b]$ is $b - a + 1$, i.e., the number of time-steps in $[a, b]$. For a time-interval $I$ and a temporal graph $\ca{G} = (G, \lambda)$, we use $G_I$ to denote the subgraph of $G$ that consists of all the edges of $G$ that are active at some time-step in $I$, i.e., $V(G_I) = V(G)$ and $E(G_I) = \set{e \in E(G) ~|~ \lambda(e) \cap I \neq \emptyset}$. 
    For $u, v \in V(G)$, we say that $u$ and $v$ are adjacent to each other \emph{during} $I$ if $uv \in E(G_I)$;  just to emphasize,  $u$ and $v$ are adjacent during $I$ if $u$ and $v$ are adjacent to each other at \emph{some} time-step in $I$, and not necessarily at \emph{every} time-step in $I$. 
        For $v \in V(G)$, we use $N_I(v)$ to denote the set of neighbors of $v$ in the graph $G_I$; when $I = \set{t}$, we omit the braces and simply write $N_t(v)$. For $u, v \in V(G)$, we use $CN_I(u, v)$ to denote the common neighborhood of $u$ and $v$ in the graph $G_I$, i.e., $CN_I(u, v) = N_I(u) \cap N_I(v)$. Notice that this definition  does not require that a vertex $w \in CN_{I}(u, v)$ be adjacent to both $u$ and $v$ at the same time-step; we have $w \in CN_{I}(u, v)$ if the edge $uw$ is active at time-step $t$ and the edge $vw$ is active at time-step $t'$ for possibly distinct $t, t' \in I$. 
   
\section{Temporal $c$-Closed Graphs}\label{sec:definitions}

What structural properties could we reasonably expect to see in temporal graphs derived from social networks?  Ideally we want to identify deterministic properties of such networks, rather than rely on any of the random models of temporal social networks in the literature \cite{DBLP:journals/csur/ChakrabartiF06,takaguchi2015analyzing}, as deterministic properties are needed to design algorithms with theoretical guarantees on the worst-case running time. Moreover, as noted by \citeauthor{DBLP:journals/siamdm/KoanaKS22}~\shortcite{DBLP:journals/siamdm/KoanaKS22}, an ideal structural parameter should be computable in polynomial-time and also easy to understand.  With these considerations in mind, an excellent candidate for such a property is that of triadic closure: this formalizes the idea that people with many friends in common are likely to be friends themselves.  

This notion has been leveraged very effectively in the static setting via the notion of \emph{\cc\ graphs}.  Introduced by \citeauthor{DBLP:journals/siamcomp/FoxRSWW20}~\shortcite{DBLP:conf/icalp/FoxRSWW18,DBLP:journals/siamcomp/FoxRSWW20} as a deterministic model for social networks, \cc\ graphs are those with the property that every two vertices that have at least $c$ common neighbors are adjacent to each other.  The \emph{closure number} of a graph is the least integer $c$ for which it is $c$-closed. 

In a series of papers, \citeauthor{DBLP:journals/siamdm/KoanaKS22}~\shortcite{DBLP:journals/siamdm/KoanaKS22,DBLP:journals/algorithmica/KoanaKS23a,DBLP:journals/algorithmica/KoanaKS23} demonstrated  that the closure number and a related parameter called the weak closure number, which we will discuss shortly,  are extremely useful graph  parameters that can be exploited to design fixed-parameter tractable (\FPT) algorithms\footnote{A computational  problem is fixed-parameter tractable (\FPT) with respect to a parameter $k$  if the problem admits an algorithm that  runs in time $f(k) n^{\cO(1)}$ for some computable function $f$ (here $n$ is the input size). An algorithm with this kind of a running time is called an \FPT\ algorithm.} 
for several problems, including classic problems such as {\sc Independent Set} and {\sc Dominating Set}. These parameters have since then  received considerable attention from the (parameterized) algorithms community~\cite{DBLP:conf/innovations/BeheraH0RS22,DBLP:journals/siamdm/KaneshMRSS23,DBLP:conf/stacs/KoanaKNS22,DBLP:journals/dam/KoanaN21,DBLP:conf/fsttcs/LokshtanovS21}.

A temporal analogue of the closure number should take the timing of interactions into account; informally, we might expect that two individuals that both interact with many of the same individuals during some short time-interval are likely to interact with each other either during this same interval or shortly before or afterwards (see Figure~\ref{fig:plots} for empirical evidence).  This leads us to the following definition (in which the notions of ``short time-interval'', ``shortly before'' and ``shortly afterwards'' can be tuned by setting appropriate values of $\Delta_0$, $\Delta_1$ and $\Delta_2$).

\begin{figure}
  \centering
  \includegraphics[width=\columnwidth]{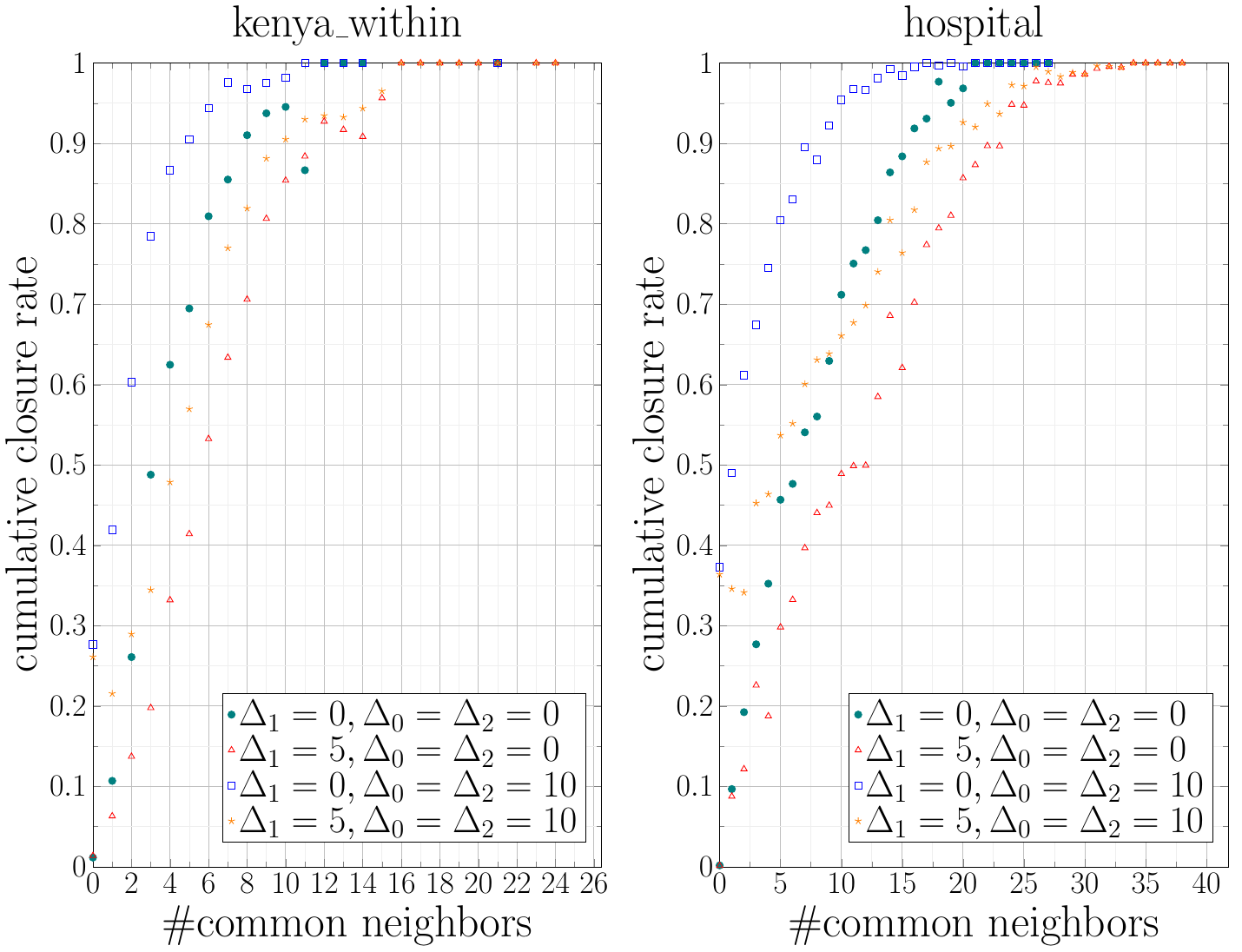}
  \caption{Cumulative closure rate of two real-world temporal networks. Each color corresponds to one  choice of \tp. For each $x$-value, the corresponding $y$-value is the cumulative closure rate, i.e., the fraction of tuples $([a,a+\Delta_1],u,v)$ such that $[a,a+\pone]\subseteq [1+\pngt,\lt - \ptwo]$ and $u$ and $v$ are distinct vertices that have $x$ common neighbors in  $[a, a+\pone]$ and are adjacent to each other during $[a - \pngt, a+\pone + \ptwo]$.}
  \label{fig:plots}
\end{figure}

    \begin{definition}[$(\pngt,\pone, \ptwo, c)$-closed graphs]\label{def:c-closed}
  For integers $\pngt, \pone, \ptwo \geq 0$ and $ c \geq 1$, a temporal graph $\ca{G} = (G, \lambda)$ is $(\pngt, \pone, \ptwo, c)$-closed if the following condition holds: for every two distinct vertices $u, v \in V(G)$ and any time-interval $[a, b]$ of length at most $\pone + 1$ (i.e., $b - a \leq \pone$) with $a \geq 1 + \pngt $ and $b \leq \lt_{\ca{G}} - \ptwo$,  if $u$ and $v$ have at least $c$ common neighbors during $[a, b]$, then $u$ and $v$ are adjacent to each other during $[a - \pngt, b + \ptwo]$.  
\end{definition}

We must note that Definition~\ref{def:c-closed} is one of the several plausible  adaptations of static \cc\ graphs to the temporal setting. But it is more general than some of the more simplistic  adaptations, say requirements such as (a) every snapshot be \cc, or (b) the footprint be \cc, or (c) the static graph during any interval of length $\pone + 1$ be \cc. By appropriately choosing $\pngt, \pone$ and $\ptwo$, we can derive each of these  three requirements as a special case of Definition~\ref{def:c-closed}. For example, fixing $\pngt = \pone = \ptwo = 0$ leads to the first requirement, i.e., every snapshot be \cc. 

    Along with \cc\ graphs, \citeauthor{DBLP:journals/siamcomp/FoxRSWW20}~\shortcite{DBLP:journals/siamcomp/FoxRSWW20} had also defined a more general class of graphs called weakly $\gamma$-closed graphs;\footnote{\citeauthor{DBLP:journals/siamcomp/FoxRSWW20}~\shortcite{DBLP:journals/siamcomp/FoxRSWW20} use the letter $c$ rather than $\gamma$ for weakly closed graphs as well. But subsequent literature on the topic~\cite{DBLP:conf/stacs/KoanaKNS22,DBLP:journals/algorithmica/KoanaKS23} use $\gamma$ for the weak version, and we defer to this trend.} we extend this too to the temporal setting. 
 For $\gamma \geq 1$, a (static)   graph $G$ is weakly $\gamma$-closed if every induced subgraph $H$ of $G$ contains a vertex $v$ such that the  number of common neighbors $v$ has with any non-neighbor (in $H$) is at most $\gamma - 1$. 

 Motivated by weakly $\gamma$-closed graphs, we define weakly \tg-closed temporal graphs, and for that, we first define the \tp-closure number of a vertex as follows. 
 Recall that $CN_{[a, b]}(u, v)$ denotes the common neighborhood of $u$ and $v$ during the time-interval $[a, b]$. 
 \begin{definition}[closure number of a vertex]\label{def:closure-vertex}
    Consider integers $\pngt, \pone, \ptwo \geq 0$ and a temporal graph $\ca{G} = (G, \lambda)$. For a vertex $v \in V(G)$, the \tp-closure of $v$ in $\ca{G}$, denoted by $\cl_{\ca{G}}(v, \tp)$, is defined as follows:
    \[
    \cl_{\ca{G}}(v, \tp) = \max_{(u, [a, b])}%
    \set{0, \card{CN_{[a, b]}(u, v)}}, %
    \]
    where the maximum is over all vertices $u$ and time-intervals $[a, b]$ such that $u \in V(G) \setminus \set{v}$, $[a, b] \subseteq [1 + \pngt, \lt_{\ca{G}} - \ptwo]$, $b - a \leq \pone$ and $uv \notin E(G_{[a - \pngt, b + \ptwo]})$. 
    That is, $\cl_{\ca{G}}(v, \tp)$ is the maximum number of common neighbors $v$ has with another vertex $u$ during any interval $[a, b]$ of length at most $\pone + 1$ such that $u$ and $v$ are not adjacent to each other during $[a - \pngt, b + \ptwo]$. 
\end{definition}

\begin{definition}[weakly \tg-closed graphs]\label{def:weak-closure}
    For integers $\pngt, \pone, \ptwo \geq 0$ and $\gamma \geq 1$, a temporal graph $\ca{G} = (G, \lambda)$ is weakly \tg-closed if one of the following two equivalent conditions holds: 
    \begin{itemize}
        \item every induced subgraph $\ca{G}'$ of $\ca{G}$ contains a vertex $v$ such that $\cl_{\ca{G}'}(v, \tp) \leq \gamma - 1$;
        
        \item there exists an ordering $v_1, v_2,\ldots, v_n$ of the vertices of $\ca{G}$ such that $\cl_{\ca{G}_i}(v_i) \leq \gamma - 1$ for every $i \in [n]$, where $\ca{G}_i$ is the subgraph of $\ca{G}$ induced by $\set{v_i, v_{i + 1},\ldots, v_n}$. 
    \end{itemize}
\end{definition} 

We define the \tp-closure number of a temporal graph $\ca{G}$ to be the least $c$  for which $\ca{G}$ is \tc-closed; we define the weak \tp-closure number analogously. 
\begin{table*}
  
  \begin{tabular}{l||rrrrr||rrrr||rrr}
\rowcolor{gray!30} \cellcolor{white}    &  \multicolumn{5}{c||}{\textbf{(A)} General statistics} & \multicolumn{4}{c||}{ \textbf{(B)} Temporal $c$ and $\gamma$} & \multicolumn{3}{c}{ \textbf{(C)} Instabilities}\\[2pt]
\hline
  Instance & $|V|$ & $|E|$ & $\lt$ & Degree & $c, \gamma$ & \multicolumn{4}{c||}{$\Delta_1/\Delta_0 (\Delta_2=\Delta_0)$} & lo-$\eta$ & \multicolumn{2}{c}{pairwise-$\eta$}  \\
    $\Delta_i$ & & & & max, min & & $0/0$ & $0/10$ & $5/0$ & $5/10$  & & $\Delta_1=0$  & $\Delta_1=5$\\
  \hline 
  \textit{baboons} & 21 & 162 & 27 & 19, 1 & 15, 4 & 8, 5 & 5, 3 & 12, 11 & 12, 9 & 11 & 8 & 7\\ 
  \textit{hospital} & 73 & 1381 & 71 & 61, 2 & 45, 25 & 20, 15 & 20, 9 & 33, 21 & 33, 18 & 26 & 24 & 36\\ 
  \textit{kenya\_across} & 21 & 54 & 45 & 14, 1 & 10, 4 & 8, 3 & 8, 3 & 8, 3 & 8, 3 & 13 & 6 & 8\\ 
  \textit{kenya\_within} & 47 & 479 & 61 & 39, 6 & 27, 11 & 11, 7 & 10, 6 & 15, 10 & 15, 10 & 19 & 14 & 14\\ 
  \textit{malawi} & 86 & 347 & 30 & 31, 1 & 10, 5 & 5, 4 & 4, 2 & 6, 5 & 6, 4 & 15 & 5 & 7\\
  \textit{workplace13} & 95 & 3915 & 275 & 93, 17 & 70, 45 & 36, 15 & 14, 7 & 41, 20 & 24, 14 & 78 & 77 & 77\\ 
  \textit{workplace15} & 217 & 4274 & 275 & 84, 1 & 41, 20 & 19, 12 & 19, 11 & 30, 16 & 30, 16 & 33 & 23 & 23\\ 
\end{tabular}
 \caption{Statistics for the instances used for numerical experiments. The $\Delta_i$ line contains the values of $\Delta_0,\Delta_1$ and $\Delta_2$, when they are relevant. \textbf{Part (A):}  The columns $|V|$, $|E|$, $c, \gamma$ and Deg max, min contain contain the number of vertices, the number of edges, the (static) closure  and weak closure numbers, and the maximum and minimum degrees of the footprint, respectively, and \lt\ is the lifetime of the temporal network. \textbf{Part (B):} Contains the closure and the weak closure numbers for different values of $\Delta_0, \Delta_1$ and $\Delta_2$; for an entry $a, b$ in a column, $a$ is the closure number (i.e., $c$) and $b$ is the weak closure number (i.e., $\gamma$). \textbf{Part (C):} The column ``lo-$\eta$'' contains the minimum value $\eta$ for which the graph is locally $\eta$-unstable. The columns ``pairwise-$\eta$'' contain the minimum value $\eta$ for which the graph is pairwise $\eta$-unstable with the following restriction; to make the computation feasible, instead of considering all intervals $[a, b]$ as required by Definition~\ref{def:pairwise-eta-unstable}, we only considered intervals $[a, a + \pone]$.} 
 \label{tab:stats}
\end{table*}

\paragraph{Empirical Results.}
We now present a dataset of real-world temporal graphs that we have used to observe how well these parameters worked in practice.
We have taken the data available on Sociopatterns\footnote{Available at \url{http://www.sociopatterns.org/datasets/}} which brings together a large number of real-world proximity networks. As we wanted to be able to calculate the values of the parameters on a personal computer, we selected only the modest-sized instances (\textit{i.e.} those with at most 250 vertices). The time-steps were also selected to reduce the computation time. The detailed list of the selected instances is the following:
\begin{compactitem}
  \item \textit{baboons}: Interactions between guinea baboons living in an enclosure of a primate center in France~\cite{gelardi2020measuring}. As time step, we chose one day.
   \item \textit{hospital}: Co-presence in a workplace in a hospital~\cite{Genois2018}. As time step, we chose one hour.
  \item \textit{kenya\_across} and \textit{kenya\_within}: Contact networks between members of households of rural Kenya~\cite{kiti2016quantifying}. As time step, we chose one hour.
  \item \textit{malawi}: Contact patterns in a village in rural Malawi~\cite{ozella2021using}. As time step, we chose one day.
  \item \textit{workplace13} and \textit{workplace15}:  Co-presence in a workplace in two different years~\cite{Genois2018}. As time step, we chose one hour.
\end{compactitem}
Some general statistics are depicted in Table~\ref{tab:stats}.

 As it was not possible to test every possible values of $\Delta_0,\Delta_1$ and $\Delta_2$, we chose to test four configurations for which $\Delta_0+\Delta_1+\Delta_2$ does not exceed 10\% of the lifetime of the longest instance (\textit{i.e.} \textit{workspace15}):
 \begin{inparaenum}[(a)]
   \item $\Delta_1=0$ and $\Delta_0=\Delta_2=0$,
   \item $\Delta_1=0$ and $\Delta_0=\Delta_2=10$,
   \item $\Delta_1=5$ and $\Delta_0=\Delta_2=0$,
   \item $\Delta_1=5$ and $\Delta_0=\Delta_2=10$.	

   \end{inparaenum}

Our empirical results indicate that for various of choices of \tp, the \tp-closure number and particularly the weak \tp-closure number are often small compared to the number of vertices and edges in the network. More telling is the fact that these values are considerably smaller than their static counterparts on the network's footprint (in which we ignore the time-steps and treat the network as static). 
See Table~\ref{tab:stats} for an overview.


\section{The Need for Neighborhood Stability}\label{sec:stability}

In the previous section we have introduced our notion of triadic closure for temporal graphs, and have provided some evidence that this definition is useful in describing structural properties of some social networks.  Our primary goal in defining this new parameter, however, is to capture realistic structural properties that can be leveraged to design efficient algorithms.

The most obvious computational problems to tackle with our new parameter are the temporal analogues of those known to admit efficient algorithms on static \cc\ graphs. 
The canonical such problem is clique enumeration (in which the goal is to output  all maximal subsets of vertices that induce complete subgraphs).   \citeauthor{DBLP:journals/siamcomp/FoxRSWW20}~\shortcite{DBLP:conf/icalp/FoxRSWW18,DBLP:journals/siamcomp/FoxRSWW20} showed that the number of maximal cliques in an $n$-vertex \cc\ graphs is at most $3^{(c- 1)/3} \cdot n^2$; this bound leads to an algorithm for enumerating all maximal cliques in time $3^{(c - 1)/3} \cdot \poly(n)$. 
We must note here that the number of maximal cliques in an arbitrary $n$-vertex graph could be as large as $3^{n/3}$~\cite{moon1965cliques} (which necessarily implies that any algorithm enumerating all maximal cliques requires time $\Omega(3^{n/3})$ in the worst case). 
Enumeration of cliques and similar dense subgraphs is also a natural problem to consider in the context of social networks, as such structures correspond to very highly connected communities within the network (a clique represents a community in which every two members interact directly).

To address these algorithmic problems on $(\pngt,\pone, \ptwo, c)$-closed temporal graphs, we first need a notion of temporal cliques. 
We use a notion of temporal clique that has been studied in the literature, particularly in the context of algorithms for enumerating maximal cliques~\cite{DBLP:journals/tcs/ViardLM16,DBLP:journals/snam/HimmelMNS17}. According to this notion,  a clique in a temporal graph is a subgraph in which every possible edge appears every so often. The formal definition is as follows. 

\begin{definition}[$\pcq$-clique~\cite{DBLP:journals/tcs/ViardLM16}]\label{def:delta-clique}
    Consider a temporal graph $\ca{G} = (G, \lambda)$. For a non-negative integer $\pcq$, a $\pcq$-clique in $\ca{G}$ is a pair $(X, [a, b])$, where $X \subseteq V(G)$ and $[a, b]$ is a time-interval, with the following properties: $b - a \geq \pcq$, and for any two distinct vertices $u, v \in X$ and for every $\tau \in [a, b - \pcq]$, the edge $uv$ is active during $[\tau, \tau + \pcq]$, i.e., there exists $t \in \lambda(uv) \cap [\tau, \tau + \pcq]$.
\end{definition}

Informally, a maximal \pcq-clique is one which is not contained in any other \pcq-clique. As a \pcq-clique has two constituent parts---a set of vertices and a time interval---the ``not contained in any other''  needs to be defined with respect to both. We define this formally now. Consider a \pcq-clique $(X, [a, b])$. We say that $(X, [a, b])$ is \emph{vertex-maximal} if there does not exist a vertex subset $X' \subseteq V(G)$ such that $X \subsetneq X'$ and $(X', [a, b])$ is a \pcq-clique. Similarly, we say that $(X, [a, b])$ is \emph{time-maximal} if there does not exist a time-interval $[a', b']$ such that $[a, b] \subsetneq [a', b']$ and $(X, [a', b'])$ is a \pcq-clique. And we say that  $(X, [a, b])$ is a \emph{maximal \pcq-clique} if it is both vertex-maximal and time-maximal. Notice that if a \pcq-clique $(X, [a, b])$ is not vertex-maximal, then $(X \cup \set{u}, [a, b])$ is a \pcq-clique for some vertex $u \notin X$. And if $(X, [a, b])$ is not time-maximal, then either $(X, [a - 1, b])$ is a \pcq-clique or $(X, [a, b + 1])$ is a \pcq-clique.

The number of maximal $\pcq$-cliques in an $n$-vertex temporal graph $\ca{G}$ could be as large as $2^{n} \cdot \lt_{\ca{G}}$ (see Example~\ref{ex:without-stability} below). \citeauthor{DBLP:journals/snam/HimmelMNS17}~\shortcite{DBLP:journals/snam/HimmelMNS17} introduced a parameter called $\pcq$-slice degeneracy, denoted by $d$, which is the maximum degeneracy of the underlying graph during any time-interval of length $\pcq$, and showed that the number of maximal \pcq-cliques is at most $3^{d/3} \cdot 2^{d + 1} \cdot n \cdot \lt$. A bound for maximal cliques w.r.t. a parameter called vertex deletion to order preservation is implicit in the work of \citeauthor{DBLP:journals/tcs/HermelinIMN23}~\shortcite{DBLP:journals/tcs/HermelinIMN23}. 

Ideally, we would like to extend the results of ~\citeauthor{DBLP:journals/siamcomp/FoxRSWW20}~\shortcite{DBLP:journals/siamcomp/FoxRSWW20} to the temporal setting; that is, bound the number of maximal \pcq-cliques in a \tc-closed temporal graph by $f(c, \pngt, \pone, \ptwo, \pcq) \cdot \poly(n, \lt)$  for some function $f$, where $n$ and $\lt$ respectively are the number of vertices and the lifetime of the temporal graph. But this is not possible: It is not difficult to construct pathological examples of temporal graphs that are \tc-closed but have $\Omega(2^n)$ maximal \pcq-cliques; see Example~\ref{ex:without-stability}. 

\begin{example}\label{ex:without-stability}
    Consider the temporal  graph $\ca{G}' = (G', \lambda')$ defined as follows. The footprint $G'$ is a clique on $n$ vertices, and we will assign time-steps to the edges in such a way that for each non-empty $X \subseteq V(G')$, we will have a maximal \pcq-clique $(X, [a, b])$ for an appropriate time-interval $[a, b]$. Let $X_1,X_2,\ldots, X_{2^n - 1}$ be the non-empty subsets of $V(G')$ (ordered arbitrarily). For $i \in [2^n - 1]$, let $t_i = (\pcq + 2)i - 1$. Now, for each $uv \in E(G')$, we define $\lambda'(uv) = \set{t_i ~|~ u, v \in X_i}$ so that each $X_i$ induces a clique at time-step $t_i$. Notice also that $t_{i + 1} - t_i =  [(\pcq + 2)(i + 1) - 1] - [(\pcq + 2)i - 1]  = \pcq + 2$, and thus there is a gap of $ \pcq + 2$ time-steps  between the cliques induced by $X_i$ and $X_{i + 1}$. Hence, for each $X_i$ with $\card{X_i} \geq 2$, $(X_i, ~ [t_i - \pcq, t_i + \pcq])$ is a maximal \pcq-clique.  As for $X_i$s  with $\card{X_i} = 1$, notice that $(X_i, [1, \lt])$ is trivially a maximal \pcq-clique. 
     It is straightforward to verify that $\ca{G'}$ is  $(\pngt, \pone, \ptwo, 1)$-closed for any $\pngt, \pone, \ptwo \geq 0$ such that $\pone \leq \pcq$; for any interval $[a, b]$ with $b - a \leq \pone \leq \pcq$, if two vertices $u$ and $v$ have at least one common neighbor during $[a, b]$, then $t_i \in [a, b]$ and $u, v \in X_i$ for some $i \in [2^n - 1]$, and hence $u$ and $v$ are adjacent to each other during $[a, b]$. 
\end{example}

Example~\ref{ex:without-stability} works because of all the cliques that appear suddenly only to disappear in an instant. Or at a local level, neighborhoods of vertices change too drastically and too suddenly. We refer to such changes as local instability, and try to limit them, which leads us to the following definition. 

\begin{definition}[locally $\eta$-unstable graphs]\label{def:eta-unstable}
    For a non-negative integer $\eta$, a temporal graph $\ca{G} = (G, \lambda)$ is locally $\eta$-unstable if $\max\set{\card{N_t(v) \setminus N_{t + 1}(v)}, \card{N_{t + 1}(v) \setminus N_{t}(v)}} \leq \eta$ for every vertex $v \in V(G)$ and every time-step $t \leq \lt - 1$. 
\end{definition}

We may think of the measure of $\eta$ as capturing the rate of local change in adjacencies. And when $\eta$ is small, the graph evolves slowly over time. In particular, when $\eta = 0$, the graph does not evolve at all; so a locally $0$-unstable temporal graph is essentially a static graph. Unfortunately, initial investigations on real datasets indicate that requiring this parameter to be small may often be too restrictive; see Table~\ref{tab:stats}.  This leads us to the following weaker (and less intuitive) version of instability, in which common neighborhoods of pairs of vertices change slowly over time, which is nevertheless sufficient for our purposes. We define this below and observe from Table~\ref{tab:stats} that in some cases it takes much smaller values than our first version.

\begin{definition}[pairwise $\eta$-unstable graphs]\label{def:pairwise-eta-unstable}
    For a non-negative integer $\eta$, a temporal graph $\ca{G} = (G, \lambda)$ is pairwise  $\eta$-unstable if for every two distinct vertices $u$ and $v$, every time-interval $[a, b]$ and every two non-negative integers $\ell, \ell'$ such that $[a - \ell, b + \ell'] \subseteq [1, \lt_{\ca{G}}]$, we have $\card{CN_{[a - \ell, b + \ell']}(u, v) \setminus CN_{[a, b]}(u, v)} \leq \eta (\ell + \ell')$.\footnote{There is a typo in the conference version of this paper: For the interval $[a, b]$ in Definition~\ref{def:pairwise-eta-unstable}, an additional constraint requiring $b - a = \pone$ was inadvertently included in the conference version. But this was a typo, and must be ignored.}   
\end{definition} 

It is not difficult to prove that if a temporal graph $\ca{G}$ is locally $\eta$-unstable, then it is also pairwise-$2\eta$-unstable;\footnote{But the converse need not hold. For example, it can the the case that two vertices $u$ and $v$ have no common neighbors during any interval. But $N(u)$ could change drastically between time-steps $t$ and $t + 1$.} if $\ca{G}$ is locally $\eta$-unstable, then for any time-interval $[a, b]$ and $\ell, \ell'$, each of the $\ell + \ell'$ time-steps in $[a - \ell', b + \ell] \setminus [a, b]$ contributes at most $2\eta$ vertices to $CN_{[a - \ell', b + \ell]}(u, v) \setminus CN_{[a, b]}(u, v)$. The same argument, when applied to a \tc-closed graph $\ca{G}$ and vertices $u$ and $v$ that are not adjacent during an interval of length at least $\pngt + \pone + \ptwo + 1$,  implies the following lemma, which  
we will use in Section~\ref{sec:bounds} to bound the number of maximal \pcq-cliques. 

\begin{lemma}\label{lem:change-new}
    Let $\ca{G}$ be a locally $\eta$-unstable, \tc-closed temporal graph. For any two distinct vertices $u, v \in V(G)$  such that $uv \notin E(G_{[a, b]})$ for some time-interval $[a, b]$ where $b - a \geq \pngt + \pone + \ptwo$, it holds that $\card{CN_{[a, b]}(u, v)} \leq c - 1 + 2\eta(b - a - \pone)$. 
\end{lemma}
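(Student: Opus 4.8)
The plan is to reduce the bound to the closure condition of Definition~\ref{def:c-closed} evaluated on a short ``buffered'' window sitting inside $[a,b]$, and then to control how much the common neighbourhood of $u$ and $v$ can grow as that window is widened out to all of $[a,b]$, using local $\eta$-instability. This is essentially a careful version of the informal argument sketched just before the statement.

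\emph{Choosing the window.} First I would set $a' = a+\pngt$ and $b' = a'+\pone$, so $[a',b']$ has length $\pone+1$. Because $b-a \ge \pngt+\pone+\ptwo$, we get $[a'-\pngt,\,b'+\ptwo] = [a,\,a+\pngt+\pone+\ptwo] \subseteq [a,b]$; also $a' \ge 1+\pngt$ (as $a\ge1$) and $b'+\ptwo \le b \le \lt_{\ca G}$, so $[a',b'] \subseteq [1+\pngt,\lt_{\ca G}-\ptwo]$. Since $uv \notin E(G_{[a,b]})$ and $[a'-\pngt,b'+\ptwo]\subseteq[a,b]$, also $uv\notin E(G_{[a'-\pngt,b'+\ptwo]})$, so applying the contrapositive of $\tc$-closedness to $u$, $v$ and the window $[a',b']$ gives $\card{CN_{[a',b']}(u,v)}\le c-1$.

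\emph{Widening the window.} It then remains to show that $\card{CN_{[a,b]}(u,v)\setminus CN_{[a',b']}(u,v)} \le 2\eta(b-a-\pone)$, because $CN_{[a',b']}(u,v)\subseteq CN_{[a,b]}(u,v)$ and the part of $[a,b]$ lying outside the window has size $(a'-a)+(b-b') = \pngt+(b-a-\pngt-\pone) = b-a-\pone$. For this I would argue that each time-step in $[a,b]\setminus[a',b']$ is ``responsible'' for at most $2\eta$ of the extra common neighbours: given $w\in CN_{[a,b]}(u,v)\setminus CN_{[a',b']}(u,v)$, it is missing from $N_{[a',b']}(u)$ or from $N_{[a',b']}(v)$; charge $w$ to the first of $u,v$ for which this happens. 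On the charged side, say $u$, the edge $uw$ is active at some time-step of $[a,a'-1]\cup[b'+1,b]$ but nowhere in $[a',b']$; assign $w$ to the such time-step $t$ nearest the window. Then $uw$ is inactive at the window-side neighbour $t'$ of $t$, so $w\in N_t(u)\setminus N_{t'}(u)$, a set of size at most $\eta$ by Definition~\ref{def:eta-unstable}; hence at most $\eta$ vertices are assigned to each pair $(t,u)$ and at most $\eta$ to each $(t,v)$. Summing over the $b-a-\pone$ time-steps of $[a,b]\setminus[a',b']$ gives the claim, and combining with the previous paragraph proves the lemma.

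\emph{Main obstacle.} The delicate step is the counting just described: a single $w$ can be adjacent to $u$ (and to $v$) at many time-steps outside $[a',b']$, and can fail the common-neighbour test on the $u$-side, the $v$-side, or both, so the charging rule must be pinned down carefully so that each $w$ is counted exactly once while every local difference set $N_t(u)\setminus N_{t'}(u)$ (resp.\ for $v$) stays of size at most $\eta$ (which also needs $t'\in[1,\lt_{\ca G}]$, automatic here since $t'$ lies between $t$ and the window, hence inside $[a,b]$). Everything else is routine; in particular the hypothesis $b-a\ge\pngt+\pone+\ptwo$ is used only to make the buffered length-$(\pone+1)$ window fit inside $[a,b]$, and it is exactly what is needed for that.
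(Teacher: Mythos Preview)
Your proposal is correct and takes essentially the same approach as the paper: pick the buffered window $[a+\pngt,\,a+\pngt+\pone]$, apply the closure condition there to get at most $c-1$ common neighbours, and then bound the growth of the common neighbourhood when widening to $[a,b]$ by charging each extra common neighbour to a transition $N_t(\cdot)\setminus N_{t'}(\cdot)$ at one of the $b-a-\pone$ time-steps outside the window. The only presentational difference is that the paper factors the widening step out as a separate general claim (bounding $\card{CN_{[a-\ell,b+\ell']}(u,v)\setminus CN_{[a,b]}(u,v)}$ by $2\eta(\ell+\ell')$) before applying it, whereas you do the charging argument inline.
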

\begin{proof}
    The proof of the lemma follows from the following more general result. 
    \begin{claim}\label{claim:change-in-nbd}
    Consider a locally $\eta$-unstable temporal graph $\ca{G} = (G, \lambda)$ and a pair of distinct vertices $u, v \in V(G)$. Then, for every time-interval $[a, b]$ and non-negative integers $\ell, \ell'$ such that $a - \ell \geq 1$ and $b + \ell' \leq \lt_{\ca{G}}$, it holds that 
    \begin{enumerate}
        \item $\card{N_{[a - \ell, b + \ell']}(v) \setminus N_{[a, b]}(v)} \leq \eta(\ell + \ell')$ for every $v \in V(G)$, 

        \item $\card{CN_{[a - \ell, b + \ell']}(u, v) \setminus CN_{[a, b]}(u, v)} \leq 2\eta (\ell + \ell')$ for any distinct $u, v \in V(G)$, and hence 

        \item\label{item:change-in-nbd-last} $\card{CN_{[a - \ell, b + \ell']}(u, v)} \leq \card{CN_{[a, b]}(u, v)} + 2 \eta (\ell + \ell')$ for any distinct $u, v \in V(G)$. 
    \end{enumerate}
\end{claim}
    Assuming Claim~\ref{claim:change-in-nbd}, let us first complete the proof of Lemma~\ref{lem:change-new}. 
    Consider a time-interval $[a, b]$ with $b - a \geq \pngt + \pone + \ptwo$ and two distinct vertices $u, v \in V(G)$ such that $uv \notin E(G_{[a, b]})$. As $b - a \geq \pngt + \pone + \ptwo$, we have $[a, a + \pngt + \pone + \ptwo] \subseteq [a, b]$, and hence $uv \notin E(G_{[a, a + \pngt + \pone + \ptwo]})$. But then, as $\ca{G}$ is \tc-closed, $u$ and $v$ have at most $c - 1$ common neighbors during $[a + \pngt, a + \pngt + \pone]$; that is,  $\card{CN_{[a + \pngt, a + \pngt + \pone]}(u, v)} \leq c - 1$. Claim~\ref{claim:change-in-nbd}-item~\ref{item:change-in-nbd-last} (with $\ell = \pngt$ and $\ell' = b - a - \pngt - \pone$) then implies that $\card{CN_{[a, b]}(u, v)} \leq \card{CN_{[a + \pngt, a + \pngt + \pone]}(u, v)} + 2\eta(\pngt + ((b - a) - (\pngt + \pone))) \leq c - 1 + 2 \eta(b - a - \pone)$. \\
    
    \noindent
    {\textit{Proof of Claim~\ref{claim:change-in-nbd}}.} Consider a time-interval $[a, b]$ and non-negative integers $\ell, \ell'$. 
     \begin{enumerate}
         \item Recall first that as $\ca{G}$ is locally $\eta$-unstable, we have $\card{N_{t + 1}(v) \setminus N_{t}(v)} \leq \eta$ and $\card{N_{t}(v) \setminus N_{t + 1}(v)} \leq \eta$ for every vertex $v \in V(G)$ and every time-step $t \leq \lt_{\ca{G}} - 1$. Consider $v \in V(G)$. Informally, item 1 of the claim  holds because each of the $\ell + \ell'$ time-steps in $[a - \ell', b + \ell] \setminus [a, b]$ contributes at most $\eta$ vertices to $N_{[a - \ell', b + \ell]}(v) \setminus N_{[a, b]}(v)$. We now prove this  formally. 
         Consider the sets  
         \begin{align*}
           A_v = \bigcup_{j = 1}^{\ell} (N_{a - j}(v) \setminus N_{a - j + 1}(v))
           \end{align*} and 
           \begin{align*}
           B_v = 
           \bigcup_{j' = 1}^{\ell'} (N_{b + j'}(v) \setminus N_{b + j' - 1}(v)).  
         \end{align*}
         Observe that $\card{A_v} \leq \sum_{j = 1}^{\ell} \card{(N_{a - j}(v) \setminus N_{a - j + 1}(v))} 
             \leq \sum_{j = 1}^{\ell} \eta 
             =\eta \ell$, 
                 and similarly
                $\card{B_v} \leq \eta \ell'$. \\          
         
         Let $S_v = A_v \cup B_v$. We will show that $N_{[a - \ell, b + \ell']}(v) \setminus N_{[a, b]}(v) \subseteq S_v$, 
         which will imply that $\card{N_{[a - \ell, b + \ell']}(v) \setminus N_{[a, b]}(v)}$ is at most $\card{S_v} \leq \card{A_v} + \card{B_v} \leq \eta(\ell + \ell')$. \\

         Now, consider $x \in N_{[a - \ell, b + \ell']}(v) \setminus N_{[a, b]}(v)$. Then $x$ is adjacent to $v$ during $[a - \ell, b + \ell'] \setminus [a, b]$, and $x$ is not adjacent to $v$ during $[a, b]$.  Therefore, either there exists an index $j$ with $1 \leq j \leq \ell$ such that $x \in N_{a - j}(v)$ and $x \notin N_{a - j + 1}(v)$, in which case $x \in N_{a - j}(v) \setminus N_{a - j + 1}(v)$, or there exists an index $j'$ with $1 \leq j' \leq \ell'$ such that $x \in N_{b + j'}(v)$ and $x \notin N_{b + j' - 1}(v)$, in which case $x \in N_{b + j'}(v) \setminus N_{b + j' - 1}(v)$. In any case, $x \in S_v$, and we can thus conclude that $N_{[a - \ell, b + \ell']}(v) \setminus N_{[a, b]}(v) \subseteq S_v$. 

         \item Consider distinct $u, v \in V(G)$. To prove item 2, notice that it is enough to prove that  $CN_{[a - \ell, b + \ell']}(u, v) \setminus CN_{[a, b]}(u, v) \subseteq S_u \cup S_v$, where $S_u$ and $S_v$ are as defined in item 1. 
         To that end, consider $y \in CN_{[a - \ell, b + \ell']}(u, v) \setminus CN_{[a, b]}(u, v)$. Then, $y \in CN_{[a - \ell, b + \ell']}(u, v)$, and hence $y \in N_{[a - \ell, b + \ell']}(u)$ and $y \in N_{[a - \ell, b + \ell']}(v)$. Also, $y \notin CN_{[a, b]}(u, v)$, and hence, either $y \notin N_{[a, b]}(u)$, in which case $y \in N_{[a - \ell, b + \ell']}(u) \setminus N_{[a, b]}(u) \subseteq S_u$,  or $y \notin N_{[a, b]}(v)$, in which case $y \in N_{[a - \ell, b + \ell']}(v) \setminus N_{[a, b]}(v) \subseteq S_v$. In any case, $y \in S_u \cup S_v$. We  thus have $CN_{[a - \ell, b + \ell']}(u, v) \setminus CN_{[a, b]}(u, v) \subseteq S_u \cup S_v$, and item 2 follows.  

         \item Observe that item 3 is an immediate consequence of item 2.  \qedhere 
     \end{enumerate} 
\end{proof}

\section{Bound for Maximal  Cliques}\label{sec:bounds}

We now bound the number of maximal \pcq-cliques in a \tc\ closed, locally $\eta$ unstable temporal graph. Specifically, we prove the following theorem. 
\begin{theorem}\label{thm:maximal-cliques-bound}
    For every $c \geq 1$ and $\eta, \pngt, \pone, \ptwo \geq 0$, where $\pcq \geq \pngt + \pone + \ptwo$, any $\eta$-unstable, \tc-closed temporal graph with $n$ vertices and lifetime $\lt$ has at most \mcbound\ maximal $\pcq$-cliques. 
\end{theorem}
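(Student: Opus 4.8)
The plan is to attach to every maximal $\pcq$-clique a short description --- essentially a pair of vertices, one time-step, and a subset of a provably small set --- and then count the descriptions. We may assume $\lt \ge \pcq + 1$, since otherwise no $\pcq$-clique exists. I would split the maximal $\pcq$-cliques into those whose time interval is the whole lifespan $[1,\lt]$ and those whose time interval is a proper subinterval.

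For the whole-lifespan cliques I would reduce to the static bound of \citeauthor{DBLP:journals/siamcomp/FoxRSWW20}~\shortcite{DBLP:conf/icalp/FoxRSWW18,DBLP:journals/siamcomp/FoxRSWW20}. Let $H$ be the static graph on $V(G)$ with $uv \in E(H)$ if and only if $uv \in E(G_{[\tau, \tau + \pcq]})$ for every $\tau \in [1, \lt - \pcq]$. Unwinding Definition~\ref{def:delta-clique} shows that a vertex set $X$ is a maximal $\pcq$-clique on $[1,\lt]$ exactly when $X$ is a maximal clique of $H$ (time-maximality being automatic on the full lifespan). If $u$ and $v$ are non-adjacent in $H$, then $uv \notin E(G_{[\tau, \tau + \pcq]})$ for some window of length $\pcq + 1 \ge \pngt + \pone + \ptwo + 1$, and every common neighbour of $u, v$ in $H$ lies in $CN_{[\tau, \tau + \pcq]}(u, v)$; hence by Lemma~\ref{lem:change-new} the vertices $u$ and $v$ have at most $c - 1 + 2\eta(\pcq - \pone) \le \nbdboundplus$ common neighbours in $H$. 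So $H$ is $(\nbdboundplus + 1)$-closed, and by Fox et al.\ it has at most $(3^{1/3})^{\nbdboundplus}\, n^2 \le 2^{\nbdboundplus} n^2$ maximal cliques.

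Now take a maximal $\pcq$-clique $(X, [a, b])$ whose interval is a proper subinterval of $[1, \lt]$; then $\card{X} \ge 2$ (a singleton $\pcq$-clique is time-maximal only on $[1,\lt]$) and either $a \ge 2$ or $b \le \lt - 1$. Suppose $a \ge 2$; the case $b \le \lt - 1$ is symmetric. By time-maximality $(X, [a-1, b])$ is not a $\pcq$-clique, whereas $(X, [a, b])$ is; since for every $\tau \in [a, b - \pcq]$ all edges of $X$ are already active during $[\tau, \tau + \pcq]$, the witness to failure must be the one new window, so there exist $u, v \in X$ with $uv \notin E(G_{[a-1,\, a-1+\pcq]})$. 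This window has length $\pcq + 1 \ge \pngt + \pone + \ptwo + 1$, so Lemma~\ref{lem:change-new} gives $\card{CN_{[a-1,\, a-1+\pcq]}(u,v)} \le c - 1 + 2\eta(\pcq - \pone)$, and appending the single time-step $a + \pcq$ on the right --- using Claim~\ref{claim:change-in-nbd}, which adds at most $2\eta$ per extra step --- yields $\card{CN_{[a-1,\, a + \pcq]}(u,v)} \le \nbdboundplus$. On the other hand, taking $\tau = a$ in Definition~\ref{def:delta-clique} gives $X \setminus \set{u, v} \subseteq CN_{[a,\, a + \pcq]}(u,v) \subseteq CN_{[a-1,\, a + \pcq]}(u,v)$. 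Hence $X = \set{u, v} \cup S$ for some $S$ contained in a set of size at most $\nbdboundplus$, and the right endpoint $b$ is then forced by $X$ and $a$ (two time-maximal $\pcq$-cliques with the same vertex set and the same left endpoint must be equal --- otherwise a window one step longer, still inside the larger interval, contradicts time-maximality of the shorter). Thus all such cliques are counted by choosing $\set{u,v}$ ($\binom{n}{2}$ ways), then $a$ (fewer than $\lt$ ways), then $S$ (at most $2^{\nbdboundplus}$ ways); the symmetric argument --- replace $[a-1,b]$ by $[a,b+1]$, the witnessing window by $[b-\pcq+1,\, b+1]$, and append the extra step on the left --- handles $a = 1,\ b \le \lt - 1$.

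Summing the three contributions --- at most $2^{\nbdboundplus} n^2$ whole-lifespan cliques and at most $\binom{n}{2}\,\lt\,2^{\nbdboundplus}$ in each boundary case --- stays comfortably below $\mcbound$. The crux of the argument, and the step I expect to be the most delicate to pin down, is the off-by-one bookkeeping: the window that certifies the non-adjacency of $u$ and $v$ has length exactly $\pcq + 1$, the minimum for which Lemma~\ref{lem:change-new} applies, while the window forced to contain $X \setminus \set{u,v}$ is shifted by one step, so one genuinely needs the single-step slack of Claim~\ref{claim:change-in-nbd}; this is precisely why the exponent is $\nbdboundplus = c - 1 + 2\eta(\pcq + 1 - \pone)$ and not $c - 1 + 2\eta(\pcq - \pone)$. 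The secondary subtlety is that whole-lifespan cliques offer no boundary to exploit and so must be routed separately through the static result of Fox et al., together with the routine observation that the unused endpoint is determined by the vertex set and the other endpoint, which is what prevents overcounting.
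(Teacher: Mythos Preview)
Your argument is correct, but it follows a genuinely different route from the paper's proof. The paper proceeds by induction on $n$: it picks an arbitrary vertex $v$, classifies maximal $\pcq$-cliques into five types according to whether $v \in X$ and whether $(X \setminus \set{v}, [a,b])$ is maximal in $\ca{G} - v$, bounds Types~3, 4a and 4b directly via Lemma~\ref{lem:change-new}, and handles Types~1 and 2 recursively. No appeal to the static result of Fox et~al.\ is made; the whole argument is self-contained.

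Your approach instead splits on the time interval: whole-lifespan cliques are handled by observing that they are exactly the maximal cliques of an auxiliary static graph $H$ which you show is $(\nbdboundplus+1)$-closed, then invoking the static Fox--Roughgarden--Seshadhri--Wei--Wein bound as a black box; proper-subinterval cliques are encoded directly using the boundary witness $u,v$. This is tidier in that it avoids induction and even gives a slightly sharper constant, but it buys less: the paper's delete-a-vertex recursion is exactly what allows the immediate extension to \emph{weakly} \tg-closed graphs (Theorem~\ref{thm:weak-maximal-cliques-bound}), because at each level one can choose $v$ to be the vertex of small closure guaranteed by Definition~\ref{def:weak-closure}. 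In your boundary case, by contrast, you apply Lemma~\ref{lem:change-new} to an \emph{arbitrary} non-adjacent pair $u,v \in X$, which requires the full \tc-closure hypothesis and does not obviously go through under weak closure alone. Your careful handling of the off-by-one between $[a-1,a-1+\pcq]$ and $[a,a+\pcq]$ via Claim~\ref{claim:change-in-nbd} matches the paper's Type~4a computation exactly.
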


We need the following observation to prove  Theorem~\ref{thm:maximal-cliques-bound}. 
\begin{observation}\label{obs:clique-misc}
    Consider a temporal graph $\ca{G} = (G, \lambda)$ and a non-negative integer $\pcq$. For each $X \subseteq V(G)$ and each positive integer $\tau \leq \lt_{\ca{G}}$, there exists at most one time-interval $[a, b]$ such that $\tau \in [a, b]$ and $(X, [a, b])$ is a maximal \pcq-clique.\footnote{To see why Observation~\ref{obs:clique-misc} is true, notice that if there exist two such time-intervals $[a_1, b_1]$ and $[a_2, b_2]$, then for $a = \min\set{a_1, a_2}$ and $b = \max\set{b_1, b_2}$,  $(X, [a, b])$ is also a \pcq-clique, which contradicts the time-maximality of $(X, [a_1, b_1])$ and $(X, [a_2, b_2])$.  As an aside, notice that this observation immediately implies that the number of maximal \pcq-cliques in $\ca{G}$ is at most $2^{\card{V(G)}} \cdot \lt$, as each pair $(X, \tau)$, where\ $X \subseteq V(G)$ and $1 \leq \tau \leq \lt_{G}$, corresponds to at most one maximal \pcq-clique.} Notice that  this holds for \emph{all} temporal graphs, and not just \tc-closed or locally (or pairwise) $\eta$-unstable temporal graphs. 
\end{observation}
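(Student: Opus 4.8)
The plan is to argue by contradiction, following the natural ``merge'' strategy hinted at in the statement. Suppose that for some $X \subseteq V(G)$ and some timestep $\tau$ there exist two \emph{distinct} maximal $\pcq$-cliques $(X, [a_1, b_1])$ and $(X, [a_2, b_2])$ with $\tau \in [a_1, b_1] \cap [a_2, b_2]$. First I would reduce to a clean configuration. Since both intervals contain $\tau$, they intersect; and since neither clique can be contained in the other (a strict containment $[a_i, b_i] \subsetneq [a_j, b_j]$ would immediately contradict the time-maximality of the smaller one), the two intervals must be properly staggered. Relabelling if necessary, I may assume $a_1 < a_2 \le \tau \le b_1 < b_2$, so that they overlap exactly in $[a_2, b_1]$.

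The key step is then to show that the merged pair $(X, [a_1, b_2])$ is itself a $\pcq$-clique; since $[a_1, b_1] \subsetneq [a_1, b_2]$, this contradicts the time-maximality of $(X, [a_1, b_1])$ and finishes the proof. The length requirement $b_2 - a_1 \ge \pcq$ is immediate. For the sliding-window requirement I would fix an edge $uv$ with $u, v \in X$ and an arbitrary window $W = [\tau', \tau' + \pcq]$ with $\tau' \in [a_1, b_2 - \pcq]$, and show $uv$ is active somewhere in $W$ by a case split on the position of $W$. If $W \subseteq [a_1, b_1]$ (that is, $\tau' + \pcq \le b_1$), activity of $uv$ in $W$ follows because $(X, [a_1, b_1])$ is a $\pcq$-clique; symmetrically, if $W \subseteq [a_2, b_2]$ (that is, $\tau' \ge a_2$), it follows from $(X, [a_2, b_2])$ being a $\pcq$-clique.

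The main obstacle is the leftover family of windows that straddle the junction, namely those with $\tau' + \pcq > b_1$ \emph{and} $\tau' < a_2$ simultaneously; for these neither clique on its own certifies activity of $uv$. Such windows exist precisely when the overlap $[a_2, b_1]$ is shorter than $\pcq$ (i.e.\ $b_1 - a_2 \le \pcq - 2$), and handling them is the crux and the step I am least confident survives unconditionally. The natural sub-goal is to show, using the maximality of \emph{both} cliques in an essential way, that two maximal $\pcq$-cliques sharing a timestep must overlap in at least $\pcq$ consecutive timesteps, which would eliminate the straddling case outright. To attempt this I would exploit the witnesses to maximality --- the failed one-step extensions $(X, [a_1, b_1 + 1])$ and $(X, [a_2 - 1, b_2])$, each forcing some edge to be inactive throughout a boundary window of length $\pcq+1$ --- and confront them with the window-coverage the \emph{other} clique guarantees across $[a_2, b_1]$. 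I would scrutinise short-overlap configurations (and in particular small $\pcq$ and the case $\card{X} = 2$), since this is exactly where the merge can break down and where any failure of the literal ``at most one'' claim would surface; should the overlap lemma resist, the argument still yields a constant bound on the number of maximal $\pcq$-cliques per pair $(X, \tau)$, which suffices for the $2^{\card{V(G)}} \cdot \lt$ count asserted in the footnote.
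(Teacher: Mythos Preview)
Your approach is exactly the paper's: merge the two intervals and contradict time-maximality. The difference is that the paper's footnote simply asserts that $(X,[\min\{a_1,a_2\},\max\{b_1,b_2\}])$ is again a \pcq-clique, whereas you pause at the straddling windows and worry that the merge might fail when the overlap is short. Your worry is justified --- and in fact the observation, read literally, is \emph{false}. Take $\pcq=2$, $X=\{u,v\}$, and let the edge $uv$ be active only at times $1$ and $5$. Then $(X,[1,3])$ and $(X,[3,7])$ are both maximal \pcq-cliques containing $\tau=3$, yet the merged pair $(X,[1,7])$ is \emph{not} a \pcq-clique: the window $[2,4]$ contains no activation of $uv$. (With three vertices and $\pcq=4$ one can even get three maximal \pcq-cliques through a single $\tau$; more generally the count per $(X,\tau)$ can be $\Theta(\pcq)$, so your hoped-for ``overlap $\ge \pcq$'' sub-goal and the constant-per-pair fallback both fail as stated.)

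What \emph{is} true, and what the paper actually needs, is the version with the stronger hypothesis $[\tau,\tau+\pcq]\subseteq[a,b]$ in place of $\tau\in[a,b]$: if two maximal \pcq-cliques $(X,[a_1,b_1])$ and $(X,[a_2,b_2])$ both contain a full window of length $\pcq+1$, then their overlap has length at least $\pcq+1$, every window of the merged interval lies entirely inside one of the two originals, and your merge argument goes through without any straddling case. This stronger form is precisely how the observation is invoked in the proof of Theorem~\ref{thm:maximal-cliques-bound} (note the phrase ``$[\tau,\tau+\pcq]\subseteq[a',b']$'' there), so the downstream results are unaffected. As for the crude $2^{\card{V(G)}}\cdot\lt$ bound in the footnote, it survives for a simpler reason: a maximal \pcq-clique $(X,[a,b])$ is already determined by the pair $(X,a)$.
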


\begin{proof}[Proof of Theorem~\ref{thm:maximal-cliques-bound}]
   Let $F(\eta, \pngt, \pone, \ptwo, c, \pcq, n, \lt)$ be the maximum number of maximal \pcq-cliques in an $\eta$-unstable \tc-closed temporal graph with $n$ vertices and lifetime $\lt$; for convenience, we use $\rho$ as a shorthand for $(\eta, \pngt, \pone, \ptwo, c)$, and write $F(\rho, \pcq, n, \lt)$. Let $\ca{G} = (G, \lambda)$ be an $\eta$-unstable, \tc-closed temporal graph with $n$ vertices and lifetime \lt, and let $v$ be an arbitrary vertex of $\ca{G}$. Then every maximal \pcq-clique $(X, [a, b])$ in $\ca{G}$ is of one (or more) of the following five types. 
    \begin{itemize}
        \item Type 1: $X$  does not contain $v$ and hence $(X, [a, b])$ is a maximal \pcq-clique in $\ca{G} - v$.

        \item Type 2: $X$ contains $v$ and $(X \setminus \set{v}, [a, b])$ is a maximal \pcq-clique in $\ca{G} - v$.

        \item Type 3: $X$ contains $v$ and $(X \setminus \set{v}, [a, b])$ is not a vertex-maximal \pcq-clique in $\ca{G} - v$. 

        \item Type 4a: $X$ contains $v$, $(X \setminus \set{v}, [a, b])$ is not a time-maximal \pcq-clique in $\ca{G} - v$, but  $(X \setminus \set{v}, [a - 1, b])$ is a \pcq-clique in $\ca{G} - v$. 

        \item Type 4b: $X$ contains $v$, $(X \setminus \set{v}, [a, b])$ is not a time-maximal \pcq-clique in $\ca{G} - v$, but $(X \setminus \set{v}, [a, b + 1])$ is a \pcq-clique in $\ca{G} - v$. 
    \end{itemize}
    We will bound the number of maximal \pcq-cliques of each type. First, types 1 and 2. As they are maximal \pcq-cliques in $\ca{G} - v$, and since  $\ca{G} - v$, being an induced subgraph of $\ca{G}$, is an $\eta$-unstable and \tc-closed temporal graph with $n - 1$ vertices and lifetime at most \lt, the number of maximal \pcq-cliques in $\ca{G} - v$ is at most $F(\rho, \pcq, n - 1, \lt)$. Therefore the number of maximal \pcq-cliques of types 1 and 2 in $\ca{G}$ is at most $F(\rho, \pcq, n - 1, \lt)$. 

    Let us now bound the number of  \pcq-cliques of type 3. Consider such a \pcq-clique $(X, [a, b])$. Then $(X \setminus \set{v}, [a, b])$ is not a vertex-maximal \pcq-clique in $\ca{G} - v$, which implies that there exists a vertex $u \in V(G) \setminus X$ such that $((X \setminus \set{v}) \cup \set{u}, [a, b])$ is a \pcq-clique. As $(X, [a, b])$ is a maximal \pcq-clique and $u \notin X$, we can conclude that there exists a time-interval $[\tau, \tau + \pcq] \subseteq [a, b]$ such that $u$ and $v$ are not adjacent to each other during $[\tau, \tau + \pcq]$, i.e., $uv \notin E(G_{[\tau, \tau + \pcq]})$.  But notice that every vertex in $X \setminus \set{v}$ is adjacent to $v$ at some time-step in $[\tau, \tau + \pcq]$ (as $(X, [a, b])$ is a \pcq-clique); that is, $X \setminus \set{v} \subseteq N_{[\tau, \tau + \pcq]}(v)$. Similarly, every vertex in $X \setminus \set{v}$ is also adjacent to $u$ at some time step in $[\tau, \tau + \pcq]$ (as $((X \setminus \set{v}) \cup \set{u}, [a, b])$ is a \pcq-clique); that is, $X \setminus \set{v} \subseteq N_{[\tau, \tau + \pcq]}(u)$. We thus have $X \setminus \set{v} \subseteq N_{[\tau, \tau + \pcq]}(v) \cap N_{[\tau, \tau + \pcq]}(u) = CN_{[\tau, \tau + \pcq]}(u, v)$. Therefore, the number of choices for $X \setminus \set{v}$ is at most $2^{\card{CN_{\tau, \tau + \pcq}(u, v)}}$. 
    Also, by Observation~\ref{obs:clique-misc}, for each subset $Y$ of $CN_{[\tau, \tau + \pcq]}(u, v)$ there exists at most one time-interval $[a', b']$ such that $[\tau, \tau + \pcq] \subseteq [a', b']$ and $(Y \cup \set{v}, [a', b'])$ is a maximal \pcq-clique in $\ca{G}$. 
    Thus, by summing over all choices for $u$ and $\tau$, we get that the number of maximal \pcq-cliques of type 3 is at most 
    $
    \sum_{(u, \tau)} 2^{\card{CN_{[\tau, \tau + \pcq]}(u, v)}}, 
    $
   where the summation is over all pairs $(u, \tau)$ such that $u \in V(G) \setminus \set{v}$, $1 \leq \tau \leq \lt - \pcq$ and $uv \notin E(G_{[\tau, \tau + \pcq]})$. Now, as $\ca{G}$ is $\eta$-unstable and \tc-closed, and as $uv \notin E(G_{[\tau, \tau + \pcq]})$ with $\pcq \geq \pngt + \pone + \ptwo$, we can apply Lemma~\ref{lem:change-new}, by which we have $\card{CN_{[\tau, \tau + \pcq]}(u, v)} \leq \nbdbound$. Then, as the pair $(u, \tau)$ has at most $n \cdot \lt$ choices, we get that the number of maximal \pcq-cliques of type 3 is at most  $2^{\nbdbound} \cdot n \cdot \lt$.

   We now bound the number of maximal \pcq-cliques of type 4a. Consider a \pcq-clique $(X, [a, b])$ of type 4a. Then $(X \setminus \set{v}, [a, b])$ is not time-maximal, and $(X \setminus \set{v}, [a - 1, b])$ is a \pcq-clique. As $(X, [a, b])$ is a maximal \pcq-clique, and in particular a time-maximal \pcq-clique, $(X, [a - 1, b])$ is not a \pcq-clique, which, along with the fact that $(X \setminus \set{v}, [a - 1, b])$ is a \pcq-clique, implies that there exists a vertex $u \in X \setminus \set{v}$ such that $u$ and $v$ are not adjacent to each other during the interval $[a - 1, a - 1 + \pcq]$. That is, $uv \notin E(G_{[a - 1, a - 1 + \pcq]})$. 
   Then, as before, 
   Lemma~\ref{lem:change-new} applies, and we thus have $\card{CN_{[a - 1, a + \pcq]}(u, v)} \leq \nbdboundplus$.  
   Now, as $(X, [a, b])$ is a \pcq-clique, and $u, v \in X$, every vertex $w \in X \setminus \set{u, v}$ is adjacent to both $u$ and $v$ during the interval $[a, a + \pcq]$. That is, $uw, vw \in E(G_{[a, a + \pcq]})$ for every $w \in X \setminus \set{u, v}$, which implies that $X \setminus \set{u, v} \subseteq CN_{[a, a + \pcq]}(u, v) \subseteq CN_{[a - 1, a + \pcq]}(u, v)$. Hence, $\card{X \setminus \set{u, v}} 
   \leq \card{CN_{a - 1, a - 1 + \pcq}(u, v)} \leq \nbdboundplus$, and thus the number of choices for $X \setminus \set{u, v}$ is at most $2^{\nbdboundplus}$. By summing over all choices of $u$ and $a$, we get that the number of maximal \pcq-cliques of type 4a is at most 
   $
   \sum_{(u, a)} 2^{\nbdboundplus}, 
   $ 
   where the summation is over all pairs $(u, a)$ such that $u \in V(G) \setminus \set{v}$, and $2 \leq a \leq \lt - \pcq$, and  $uv \notin E(G_{[a - 1, a - 1 + \pcq]})$. As $(u, a)$ has at most $n \cdot \lt$ choices, we get that the number of maximal \pcq-cliques of type 4a is at most $2^{\nbdboundplus} \cdot n \cdot \lt$. By symmetric arguments, we can also conclude that the number of maximal \pcq-cliques of type 4b is at most $2^{\nbdboundplus} \cdot n \cdot \lt$. 

   We have thus shown that the number of maximal \pcq-cliques of types 3, 4a and 4b together is at most 
   \[
   \underbrace{2^{\nbdbound} \cdot n \cdot \lt}_{\text{Type 3}} + \underbrace{2 \cdot 2^{\nbdboundplus} \cdot n \cdot \lt}_{\text{Types 4a and 4b}},
   \]
   which is at most $3 \cdot 2^{\nbdboundplus} \cdot n \cdot \lt$. Recall that the number of maximal \pcq-cliques of types 1 and 2 together is at most $F(\rho, \pcq, n - 1, \lt)$. 
   Thus, taking into account all five types, the number of maximal \pcq-cliques in $\ca{G}$ is governed by the recursive inequality 
   \[
   F(\rho, \pcq, n, \lt) \leq F(\rho, \pcq, n - 1, \lt) + 3 \cdot 2^{\nbdboundplus} \cdot n \cdot \lt. 
   \]
   By induction on $n$ with  base case $F(\rho, \pcq, 1, \lt) = 1$, we get that $F(\rho, \pcq, n, \lt)$ is at most $3 \cdot 2^{\nbdboundplus} \cdot n^2 \cdot \lt$.
 
   This completes the proof of the theorem. 
\end{proof}

\subsubsection*{Extensions and Implications of Theorem~\ref{thm:maximal-cliques-bound}.} We now note down a number of results that can be derived by using the same arguments as in the proof of Theorem~\ref{thm:maximal-cliques-bound}. Notice first that in the proof of Theorem~\ref{thm:maximal-cliques-bound}, the vertex $v$ was chosen arbitrarily, and the only property of $v$ that we used was $\cl_{\ca{G}}(v, \tp) \leq c - 1$. Hence the same proof will still go through so long as there exists a vertex $v$ with this property at each recursive level; that is, there exists an ordering $v_1, v_2,\ldots, v_n$ of the vertices of $\ca{G}$ such that $\cl_{\ca{G}_i}(v_i, \tp) \leq c - 1$, where $\ca{G}_i$ is the subgraph of $\ca{G}$ induced by $v_i, v_{i + 1},\ldots, v_n$. Recall that this is precisely the definition of weakly \tc-closed graphs (Definition~\ref{def:weak-closure}). We thus have the following result. 

\begin{theorem}\label{thm:weak-maximal-cliques-bound}
    For every $\gamma \geq 1$ and $\eta, \pngt, \pone, \ptwo \geq 0$, where $\pcq \geq \pngt + \pone + \ptwo$, every locally $\eta$-unstable, weakly \tg-closed temporal graph with $n$ vertices and lifetime $\lt$ has at most \weakmcbound\ maximal $\pcq$-cliques.
\end{theorem}

Observe also that the proof of Theorem~\ref{thm:maximal-cliques-bound} can naturally be turned into an algorithm that enumerates all maximal \pcq-cliques. But we may instead use any already known algorithm for enumerating maximal \pcq-cliques. In particular, by using an algorithm due to \citeauthor{DBLP:journals/snam/HimmelMNS17}~\shortcite{DBLP:journals/snam/HimmelMNS17},\footnote{We use Theorem 2 in \cite{DBLP:journals/snam/HimmelMNS17}, which says that all maximal \pcq-cliques can be enumerated in time $\cO(x \cdot m + m \cdot \lt)$, where $x$ is the number of \emph{time-maximal} \pcq-cliques,  and $m$ is the number of edges in the footprint. To use this result, we therefore need to bound the number of time-maximal \pcq-cliques. 
But it is easy to adapt our proof of Theorem~\ref{thm:maximal-cliques-bound} to bound the number of time-maximal \pcq-cliques by $2 \cdot 2^{\nbdboundplus} \cdot n^2 \cdot \lt$; we only have to omit Type 3 cliques.} we get the following result. 

\begin{theorem}\label{thm:clique-algorithm}
    There is an algorithm that given a locally $\eta$-unstable, \tc-closed temporal graph $\ca{G} = (G, \lambda)$ and $\pcq \geq \pngt + \pone + \ptwo$, runs in time $\cO(2^{\nbdboundplus} \cdot n^2 \cdot m \cdot \lt)$, and enumerates all maximal \pcq-cliques in $\ca{G}$, where $n$ and $m$ respectively are the number of vertices and edges of the footprint $G$.  
\end{theorem}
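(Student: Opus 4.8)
The plan is to not re-derive an enumeration algorithm from scratch, but to invoke the known algorithm of \citeauthor{DBLP:journals/snam/HimmelMNS17}~\shortcite{DBLP:journals/snam/HimmelMNS17}, whose running time is $\cO(x \cdot m + m \cdot \lt)$, where $x$ is the number of \emph{time-maximal} $\pcq$-cliques and $m = \card{E(G)}$. So the whole task reduces to proving $x \leq 2 \cdot 2^{\nbdboundplus} \cdot n^2 \cdot \lt$; substituting this into the Himmel et al.\ bound and absorbing the lower-order term $m \cdot \lt$ (using $2^{\nbdboundplus} n^2 \geq 1$) immediately gives the claimed $\cO(2^{\nbdboundplus} \cdot n^2 \cdot m \cdot \lt)$.

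To bound the number of time-maximal $\pcq$-cliques I would re-run the inductive argument from the proof of Theorem~\ref{thm:maximal-cliques-bound} almost verbatim, but discarding Type~3. Fix an arbitrary vertex $v$ of $\ca{G}$; since $\ca{G}$ is \tc-closed, $\cl_{\ca{G}}(v, \tp) \leq c - 1$, and induced subgraphs inherit the closure property, so the recursion remains valid at every level. Classify each time-maximal $\pcq$-clique $(X, [a, b])$ of $\ca{G}$ by the status of $(X \setminus \set{v}, [a, b])$ in $\ca{G} - v$: if $v \notin X$ (Type~1), or $v \in X$ and $(X \setminus \set{v}, [a, b])$ is time-maximal in $\ca{G} - v$ (Type~2), then $(X, [a, b])$, respectively $(X \setminus \set{v}, [a, b])$, is one of the time-maximal $\pcq$-cliques of $\ca{G} - v$; and if $v \in X$ but $(X \setminus \set{v}, [a, b])$ is not time-maximal in $\ca{G} - v$, then one of $(X \setminus \set{v}, [a - 1, b])$, $(X \setminus \set{v}, [a, b + 1])$ is a $\pcq$-clique (Types~4a/4b). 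The key point is that time-maximality of $(X, [a, b])$ imposes no vertex-maximality requirement, so the Type~3 case of the original proof simply never needs to be counted. For Types~4a and~4b the original argument carries over unchanged: time-maximality of $(X, [a, b])$ in $\ca{G}$ forces some $u \in X \setminus \set{v}$ with $uv \notin E(G_{[a - 1, a - 1 + \pcq]})$ (respectively on the right end), Lemma~\ref{lem:change-new} applies because $\pcq \geq \pngt + \pone + \ptwo$ and gives $\card{CN_{[a - 1, a + \pcq]}(u, v)} \leq \nbdboundplus$, the set $X \setminus \set{u, v}$ is a subset of this common neighborhood, and Observation~\ref{obs:clique-misc} pins down $[a, b]$ from the pair $(u, a)$; since $(u, a)$ has at most $n \cdot \lt$ choices, Types~4a and~4b together contribute at most $2 \cdot 2^{\nbdboundplus} \cdot n \cdot \lt$ cliques.

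This yields the recurrence $F'(n, \lt) \leq F'(n - 1, \lt) + 2 \cdot 2^{\nbdboundplus} \cdot n \cdot \lt$ for the maximum number of time-maximal $\pcq$-cliques, which with base case $F'(1, \lt) = 1$ solves to $F'(n, \lt) \leq 2 \cdot 2^{\nbdboundplus} \cdot n^2 \cdot \lt$ by induction on $n$. (The same argument applied to the first vertex of a weak-closure ordering gives the analogous bound for weakly \tg-closed graphs, if one wants the weak version as well.)

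The only point that needs genuine care — the ``hard part,'' such as it is — is verifying that Types~1,~2,~4a,~4b really cover \emph{every} time-maximal $\pcq$-clique, so that dropping Type~3 is legitimate. This comes down to the observation that a $\pcq$-clique $(X \setminus \set{v}, [a, b])$ in $\ca{G} - v$ that is not time-maximal there must admit a one-time-step extension on the left or the right (Type~4a or~4b), while if it is time-maximal it is Type~2, and $v \notin X$ is Type~1; there is no hidden case. Everything else is a routine transcription of the proof of Theorem~\ref{thm:maximal-cliques-bound} plus quoting the Himmel et al.\ enumeration algorithm as a black box.
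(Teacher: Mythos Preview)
Your proposal follows the paper's own sketch essentially verbatim (invoke the Himmel et al.\ algorithm and bound the number $x$ of time-maximal $\pcq$-cliques by rerunning the induction of Theorem~\ref{thm:maximal-cliques-bound} with Type~3 dropped), and you are right that Types~1, 2, 4a, 4b exhaust all cases. The gap is in the step you do \emph{not} flag as delicate: charging Types~1 and~2 together to a single $F'(n-1,\lt)$. In the proof of Theorem~\ref{thm:maximal-cliques-bound} this works because a \emph{maximal} clique $(Y,[a,b])$ of $\ca{G}-v$ cannot be both a Type~1 clique of $\ca{G}$ and the remove-$v$ image of a Type~2 clique: vertex-maximality of $(Y,[a,b])$ in $\ca{G}$ forces $(Y\cup\{v\},[a,b])$ not to be a clique. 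With only \emph{time}-maximality that exclusion disappears: if $(Y,[a,b])$ is time-maximal and $(Y\cup\{v\},[a,b])$ happens to be a $\pcq$-clique, then $(Y\cup\{v\},[a,b])$ is automatically time-maximal as well (any time-extension would restrict to a time-extension of $(Y,[a,b])$), so both get counted. The honest recurrence is $F'(n,\lt)\le 2F'(n-1,\lt)+2\cdot 2^{\nbdboundplus}\cdot n\cdot\lt$, which solves to something exponential in $n$.

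In fact the target bound $x \le 2\cdot 2^{\nbdboundplus}\cdot n^2\cdot\lt$ is false. Take $G=K_n$ with every edge active at every time step $1,\dots,\lt$; this temporal graph is locally $0$-unstable and $(0,0,0,1)$-closed, so $\nbdboundplus=0$, yet every non-empty $X\subseteq V(G)$ gives a time-maximal $\pcq$-clique $(X,[1,\lt])$, i.e.\ $x=2^n-1$. The paper's footnote asserts the same bound via the same ``omit Type~3'' reasoning, so this is not something you missed that the paper supplies; to actually get the running time in Theorem~\ref{thm:clique-algorithm} one would need to turn the proof of Theorem~\ref{thm:maximal-cliques-bound} directly into an enumeration procedure (which the paper also mentions as an option) rather than go through the Himmel et al.\ black box with a bound on $x$.
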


 Theorem~\ref{thm:clique-algorithm} says that the maximal clique enumeration problem and consequently the decision problem of checking if a temporal graph contains a clique of a given size are fixed-parameter tractable, when parameterized by $c + \eta + \pcq$.  
 
\begin{remark}[Pairwise $\eta$-instability is sufficient] 
     In the proof of Theorem~\ref{thm:maximal-cliques-bound}, we used the local $\eta$-instability of $\ca{G}$ only when we invoked Lemma~\ref{lem:change-new}; for example, in the type 3 case, to bound $\card{CN_{[\tau, \tau + \pcq]}(u, v)}$ using the fact that $\card{CN_{[\tau + \pngt, \tau + \pngt + \pone]}(u, v)} \leq c - 1$.  For this, notice that pairwise $\eta$-instability is sufficient. Thus we may replace local $\eta$-instability with pairwise $2\eta$-instability in Theorems~\ref{thm:maximal-cliques-bound},  \ref{thm:weak-maximal-cliques-bound} and \ref{thm:clique-algorithm}, and the same bounds will still hold; the same goes for Theorem~\ref{thm:maximal-plexes-bound} in Section~\ref{sec:bounds-more}. 
\end{remark}

\begin{remark}[Necessity of the assumptions in Theorem~\ref{thm:maximal-cliques-bound}]
    Theorem~\ref{thm:maximal-cliques-bound} crucially relies on three  requirements: $\eta$-instability, \tc-closure and $\pcq \geq \pngt + \pone + \ptwo$. Each of these  requirements is necessary to yield our bound for the number of maximal \pcq-cliques. Without any one of them, the number of maximal \pcq-cliques may blow up to $2^{\Omega(n)}$. We already saw the necessity of $\eta$-instability in Example~\ref{ex:without-stability}. The necessity of \tc\ closure is even more straightforward, because for every $n \geq 1$, there exists a static graph with $3n$ vertices and $3^n$ maximal cliques; this is the classic Moon-Moser graph, the complete $n$-partite graph with exactly 3 vertices in each part~\cite{moon1965cliques}. To adapt it to the temporal setting, we only need to  assign all time-steps from $1$ to $\pcq + 1$ to every edge; the resulting temporal graph is $0$-unstable, has lifetime $\pcq + 1$ and contains $3^n$ maximal \pcq-cliques, but not \tc-closed any $\pngt, \pone, \ptwo \geq 0$ with $\pngt + \pone + \ptwo \leq \pcq$ and $c \leq 3n - 3$. Finally, to see that $\pcq \geq \pngt + \pone + \ptwo$ is also necessary, notice that for any $c \geq 1$ and any temporal graph $\ca{G}$ (including the temporal adaptation of the Moon-Moser graph that we just saw), we can always choose \tp\ in such a way that $\ca{G}$ is \tc-closed; for example, if we choose $\pngt = \ptwo = \lt_{\ca{G}}$, then $\ca{G}$ would be vacuously \tc\ closed for any $c$. %
\end{remark}

\setlength{\tabcolsep}{1mm}
\begin{table}
  \caption{Weakly versions of closure and pairwise instability parameters. Columns $\gamma$ are weak closure numbers, columns $\eta$ are weak  pairwise instability and columns ``b'' are the ``weak version'' that minimizes both closure and pairwise instability simultaneously.}
  \begin{tabular}{l||rrr|rrr|rrr|rrr}
    & \multicolumn{12}{c}{$\Delta_1/\Delta_2$}\\
    Instance & \multicolumn{3}{c|}{$0/0$}
    & \multicolumn{3}{c|}{$0/10$}
    & \multicolumn{3}{c|}{$5/0$}
    & \multicolumn{3}{c}{$5/10$}
    \\
    & $\gamma$  & $\eta$ & b & $\gamma$ & $\eta$ & b & $\gamma$ & $\eta$ & b & $\gamma$ & $\eta$ & b \\
    \hline 
    \textit{baboons} & 5 & 6 & 6 & 3 & 6 & 6 & 11 & 5 & 11 & 9 & 5 & 9\\ 
    \textit{hospital} & 15 & 18 & 18 & 9 & 18 & 18 & 21 & 22 & 25 & 18 & 22 & 24\\ 
    \textit{kenya\_acr} & 3 & 3 & 3 & 3 & 3 & 3 & 3 & 3 & 3 & 3 & 3 & 3\\ 
    \textit{kenya\_wit} & 7 & 9 & 9 & 6 & 9 & 9 & 10 & 8 & 10 & 10 & 8 & 10\\ 
    \textit{malawi} & 4 & 4 & 4 & 2 & 4 & 4 & 5 & 4 & 5 & 4 & 4 & 5\\
    \textit{workp13} & 15 & 77 & 77 & 7 & 77 & 77 & 20 & 77 & 77 & 14 & 77 & 77\\ 
    \textit{work15} & 12 & 14 & 15 & 11 & 14 & 15 & 16 & 14 & 16 & 16 & 14 & 16\\ 
  \end{tabular}
  \label{tab:weakly}
\end{table}

\begin{remark}[Weak versions of pairwise-instability]
    Just like the weak $\gamma$-closure, it is possible to define a weak version of the pairwise instability. 
    A temporal graph is weakly pairwise $\eta$-unstable if it is possible to find an ordering $v_1,\dots,v_n$ of the vertices such that for pair $v_i$ and $v_j$ with $i < j$, any time interval $[a,b]$ with $b-a = \Delta_1$ and any two non-negative integers $\ell,\ell'$ such that $[a - \ell,b+\ell'] \subseteq [1,\lt]$, we have $\card{CN_{[a - \ell, b + \ell']}(u, v) \setminus CN_{[a, b]}(u, v)} \leq \eta (\ell + \ell')$ in the subgraph induced by $\{v_i,\dots,v_n\}$. 
Another particularly interesting ``weak version'' consists in finding an order of the vertices that minimizes both $\gamma$ and $\eta$. That is, an ordering $v_1,\dots,v_n$ that minimizes the maximum between the temporal closure of $v_i$ and the pairwise instability of $v_i$ in the graph induced by $\{v_i,\dots,v_n\}$. As we can observe in Table~\ref{tab:weakly}, the weakly versions have lower values than their non-weakly counterparts.
\end{remark} 

\section{Bounds for Other Dense Subgraphs}\label{sec:bounds-more}
In Section~\ref{sec:bounds}, we bounded the number of maximal \pcq-cliques. Using the same ideas, we can prove similar bounds for other dense subgraphs such as the temporal adaptations of $(k + 1)$-plexes and $k$-defective cliques. For $k \geq 0$, a $(k + 1)$-plex is a static graph in which every vertex has at most $k$ non-neighbors, and a $k$-defective clique is one which has at least $\binom{n}{2} - k$ edges (where $n$ is the number of vertices). Thus a static clique is a $1$-plex and a $0$-defective clique.

   \citeauthor{DBLP:conf/innovations/BeheraH0RS22}~\shortcite{DBLP:conf/innovations/BeheraH0RS22} showed that various dense subgraphs also admit bounds of the form $f(c) \cdot \poly(n)$ in a static \cc\ graph. In particular, they showed that the number of maximal $(k + 1)$-plexes in an $n$-vertex \cc\ graph is at most $n^{2k} r_k^{c} c^{\cO(1)}$, where $r_k < 2$ is a constant. The other dense subgraphs they considered were complements of  forests, bounded treewidth graphs and bounded degeneracy graphs. \citeauthor{DBLP:journals/algorithmica/KoanaKS23a}~\shortcite{DBLP:journals/algorithmica/KoanaKS23a} proved  similar bounds on weakly $\gamma$-closed graphs; they showed that in an $n$-vertex weakly $\gamma$-closed graph, the number of maximal $(k + 1)$-plexes is at most $2^{\gamma} n^{2k - 1}$ and the number of maximal $k$-defective cliques is at most $2^{\gamma} n^{k + 1}$. 

\citeauthor{DBLP:journals/jea/BentertHMMNS19}~\shortcite{DBLP:journals/jea/BentertHMMNS19} adapted the definition of $(k + 1)$-plexes to the temporal setting as follows. For non-negative integers $\pcq$ and $k$, a $\ppx$-plex in a temporal graph $\ca{G}$ is a pair $(X, [a, b])$, where $X \subseteq V(G)$ and $[a, b]$ is a time-interval such that $b - a \geq \pcq$, and for every vertex $v \in X$ and for every $\tau \in [a, b - \pcq]$, there exist at most $k$ vertices $u \in X \setminus \set{v}$ such that $uv \notin E(G_{[\tau, \tau + \pcq]})$.  We similarly adapt $k$-defective cliques as follows. A $\pdc$-defective clique in $\ca{G}$ is a pair $(X, [a, b])$, where $X \subseteq V(G)$ and $[a, b]$ is a time-interval such that $b - a \geq \pcq$, and for every $\tau \in [a, b - \pcq]$, the static graph $G[X]$ has at least $\binom{\card{X}}{2} - k$ edges that are active during the interval $[\tau, \tau + \pcq]$. The definitions of  a maximal \ppx-plex and a maximal \pdc-defective cliques are analogous to the definition of a maximal \pcq-clique. And we prove the following result.  

\begin{theorem}\label{thm:maximal-plexes-bound}
    For $\gamma \geq 1$ and $\eta, \pngt, \pone, \ptwo, k \geq 0$, where $\pcq \geq \pngt + \pone + \ptwo$, any $\eta$-unstable, weakly \tg-closed temporal graph with $n$ vertices and lifetime $\lt$ has at most \weakmplexbound\ maximal \ppx-plexes, and at most \weakmdcbound\ maximal \pdc-defective cliques.  
\end{theorem}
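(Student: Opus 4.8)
The plan is to re-run the recursive argument behind Theorem~\ref{thm:maximal-cliques-bound} (in its weakly-closed form, Theorem~\ref{thm:weak-maximal-cliques-bound}), paying an extra polynomial factor $n^{\cO(k)}$ that is forced by the fact that in a plex (resp.\ defective clique) a bounded number of vertices of $X$ are allowed to miss their neighbours. For each of the two structures I would introduce a function $F(\cdot)$ counting the maximal sub-structures of an $\eta$-unstable, weakly \tg-closed temporal graph on $n$ vertices with lifetime $\lt$, fix a vertex $v$ with $\cl_{\ca{G}}(v,\tp)\leq\gamma-1$ --- one exists in $\ca{G}$, and keeps existing in every induced subgraph that arises, by Definition~\ref{def:weak-closure} --- and classify every maximal \ppx-plex (resp.\ \pdc-defective clique) $(X,[a,b])$ into the five types of the proof of Theorem~\ref{thm:maximal-cliques-bound}, according to whether $v\in X$ and whether $(X\setminus\set{v},[a,b])$ is vertex- and/or time-maximal in $\ca{G}-v$. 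The only thing needed to make this legitimate is that deleting a vertex from a \ppx-plex (resp.\ \pdc-defective clique) again yields one, which is immediate, since removing a vertex increases neither any vertex's number of non-neighbours nor the number of missing edges in any window. Types~1 and~2 are bounded by $F$ on $n-1$ vertices exactly as before, so the whole problem reduces to bounding the structures of types~3, 4a and 4b.

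The heart of the proof is the counting for type~3, with types~4a/4b analogous after sliding the relevant $\pcq$-window one step past the left (resp.\ right) endpoint of $[a,b]$ --- it is this slide, handled through Claim~\ref{claim:change-in-nbd} applied to a window of length $\pcq+1$, that accounts for the ``$+1$'' in \weaknbdboundplus. As in the clique case, non-vertex-maximality of $(X\setminus\set{v},[a,b])$ in $\ca{G}-v$ produces a vertex $u\notin X$ such that $(X\cup\set{u},[a,b])$ is not a plex (resp.\ defective clique) while $((X\setminus\set{v})\cup\set{u},[a,b])$ is one; chasing why the failure occurs --- the over-budget vertex is $v$, or is $u$, or is some $w\in X$ whose budget was already tight --- isolates a $\pcq$-window $W\subseteq[a,b]$ together with a vertex $z$ (equal to $u$, or lying in $X$) with $zv\notin E(G_W)$. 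Since $\pcq\geq\pngt+\pone+\ptwo$, the weakly-closed version of Lemma~\ref{lem:change-new} (which needs only $\cl_{\ca{G}}(v,\tp)\leq\gamma-1$, not full \tc-closure) gives $\card{CN_{W'}(z,v)}\leq\weaknbdboundplus$ for the slightly enlarged window $W'$. Now the relaxation enters: every vertex of $X$ belongs to $\set{v,z}\cup CN_{W'}(z,v)$ except for the $\leq k$ non-neighbours of $v$ inside $X$ on $W$ and the $\leq k$ non-neighbours of $z$ inside $X$ on $W$; for defective cliques these two exceptional sets together have only $\leq k$ elements, being endpoints of the $\leq k$ missing edges of $W$. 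Consequently $X$, and hence the whole pair $(X,[a,b])$ via Observation~\ref{obs:clique-misc}, is determined by the vertex $z$ (at most $n$ choices, and none when $z\in X$ is already forced into an exceptional set we are enumerating), the start of $W$ (at most $\lt$), the $\cO(k)$-element exceptional vertex set(s) (at most $n^{\cO(k)}$), and a subset of $CN_{W'}(z,v)$ (at most $2^{\weaknbdboundplus}$). Summing these over the three types and substituting the per-level bound into $F(n)\leq F(n-1)+(\text{per-level})$ multiplies its polynomial exponent by one and yields the claimed bounds \weakmplexbound\ and \weakmdcbound.

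The hard part --- and essentially the only place that is not a mechanical transcription of the proof of Theorem~\ref{thm:maximal-cliques-bound} --- is this final accounting. Unlike for cliques, ``adding a vertex destroys the structure'' does not hand us a single clean non-edge, so one has to enumerate the few ways the structure can break and, in each, pin down simultaneously a non-adjacent pair meeting $v$ and the $\cO(k)$ exceptional vertices, while taking care never to pay a spurious factor of $n$ for a vertex that is forced to lie in an exceptional set already being enumerated (and, for types~4a/4b, to handle the shifted window without the exceptional sets blowing up). Getting this tight is exactly what makes the polynomial exponents come out as $\max\set{2k,k+2}$ and $k+2$ --- and, at $k=0$, reproduce the clique bound of Theorem~\ref{thm:weak-maximal-cliques-bound} --- rather than something larger.
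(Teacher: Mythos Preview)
Your plan shares the paper's scaffolding --- recursion on a vertex $v$ of small closure, a type decomposition of the maximal structures containing $v$, Lemma~\ref{lem:change-new} (in its one-sided weak form) to bound $\card{CN}$, and Observation~\ref{obs:clique-misc} to recover the interval --- but the type decomposition itself differs from the paper's in a way that affects the plex exponent. You reuse the five types of the clique proof, splitting the ``$(X\setminus\set{v},[a,b])$ is not maximal'' case directly by vertex- versus time-maximality. The paper instead inserts a prior split: its Type~3 is ``there exist $u\in X\setminus\set{v}$ and a window $[\tau,\tau+\pcq]\subseteq[a,b]$ with $uv\notin E(G_{[\tau,\tau+\pcq]})$,'' and its Type~4 is the complement, ``$v$ is adjacent to every vertex of $X$ on every $\pcq$-window,'' with Type~4 then subdivided into~4a (not vertex-maximal) and~4b/4c (not time-maximal, left/right). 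In the paper's Type~3 both endpoints of the non-edge lie in $X$ and each has already spent one of its $k$ allowed non-neighbours on the other, so the exceptional set has size at most $2(k-1)$ and the per-level cost is $n^{2k-1}\cdot\lt$; in the paper's Type~4, $v$ contributes \emph{no} exceptional vertices, so the exceptional set has size at most $k$ and the per-level cost is $n^{k+1}\cdot\lt$. Together these give $n^{\max\set{2k,k+2}}$ after recursion.

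In your Type~3, by contrast, the sub-cases ``over-budget vertex is $v$'' and ``over-budget vertex is $u$'' force $z=u\notin X$, and nothing in your argument rules out $v$ still having up to $k$ non-neighbours inside $X$ on $W$. Your exceptional set is then of size up to $2k$, and you additionally pay a factor~$n$ for $u$, giving $n^{2k+1}\cdot\lt$ per level and hence $n^{2k+2}$ overall --- off by $n^{2}$ from the claimed bound for $k\geq 2$. Your closing remark that ``getting this tight'' is the hard part is on the mark, but the mechanism you describe does not achieve it; the missing ingredient is precisely the paper's prior split on whether $v$ has a non-neighbour inside $X$. (A similar wrinkle arises for defective cliques: your ``$\leq k$ endpoints of the $\leq k$ missing edges'' bound on the exceptional set is only immediate when $z\in X$, since for $z=u\notin X$ the edges $xu$ are not missing edges of $G[X]$; the paper's split again sidesteps this.)
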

\begin{proof}
    Let us first bound the number of maximal \ppx-plexes in a \tc-closed graph. Consider $c \geq 1$ and $\eta, \pngt, \pone, \ptwo \geq 0$, where $\pcq \geq \pngt + \pone + \ptwo$. Let $F(\eta, \pngt, \pone, \ptwo, c, \pcq, k, n, \lt)$ be the maximum number of maximal \ppx-plexes in an $\eta$-unstable, \tc-closed temporal graph with $n$ vertices and lifetime $\lt$; as before, we use $\rho$ as a shorthand for $(\eta, \pngt, \pone, \ptwo, c)$, and write $F(\rho, \pcq, k, n, \lt)$. 

    Let $\ca{G} = (G, \lambda)$ be an $\eta$-unstable, \tc-closed temporal graph with $n$ vertices and lifetime \lt, and let $v$ be an arbitrary vertex of $\ca{G}$. Then every maximal \ppx-plex $(X, [a, b])$ in $\ca{G}$ is of one (or more) of the following four  types. 
    \begin{itemize}
        \item Type 1: $X$ does not contain $v$ and hence $(X, [a, b])$ is a maximal \ppx-plex in $\ca{G} - v$.

        \item Type 2: $X$ contains $v$ and $(X \setminus \set{v}, [a, b])$ is a maximal \ppx-plex in $\ca{G} - v$.

        \item Type 3: $X$ contains $v$, $(X \setminus \set{v}, [a, b])$ is not a maximal \ppx-plex in $\ca{G} - v$, and there exist a vertex $u \in X \setminus \set{v}$ and a time-step $\tau \in [a, b - \pcq]$ such that $uv \notin E(G_{[\tau, \tau + \pcq]})$.  

        \item Type 4: $X$ contains $v$, $(X \setminus \set{v}, [a, b])$ is not a maximal \ppx-plex in $\ca{G} - v$, and $wv \in E(G_{[\tau, \tau + \pcq]})$ for every vertex $w \in X \setminus \set{v}$ and every time-step $\tau \in [a, b - \pcq]$. We now further classify Type 4 \ppx-plexes into three sub-types as follows. 
        \begin{itemize}
            \item Type 4a: $(X \setminus \set{v}, [a, b])$ is not vertex-maximal. 
            \item Type 4b: $(X \setminus \set{v}, [a, b])$ is not time-maximal, and $(X \setminus \set{v}, [a - 1, b])$ is a \ppx-plex in $\ca{G} - v$. 
            \item Type 4c: $(X \setminus \set{v}, [a, b])$ is not time-maximal, and $(X \setminus \set{v}, [a, b + 1])$ is a \ppx-plex in $\ca{G} - v$. 
        \end{itemize}
        \end{itemize}
    We will bound the number of maximal \ppx-plexes of each type. As maximal \ppx-plexes of types 1 and 2 are maximal in $\ca{G}- v$, their number is at most $F(\rho, \pcq, k, n - 1, \lt)$. 
    
    Let us now bound the number of  \ppx-plexes of type 3. Consider such a \ppx-plex  $(X, [a, b])$. Then there exist $u \in X \setminus \set{v}$ and $\tau \in [a, b - \pcq]$ such that $uv \notin E(G_{[\tau, \tau + \pcq]})$. We partition $X \setminus \set{u, v}$ into two (possibly empty) sets $S_{X, uv}$ and $S'_{X, uv}$ as follows: $S_{X, uv}$ is the set of  vertices in $X \setminus \set{u, v}$ that are adjacent to both $u$ and $v$ during $[\tau, \tau + \pcq]$, i.e., $S_{X, uv} = (X \setminus \set{u, v}) \cap CN_{[\tau, \tau + \pcq]}(u, v)$, and $S'_{X, uv}$ is the set of vertices in $X \setminus \set{u, v}$ that are not adjacent to at least one of $u$ and $v$ during  $[\tau, \tau + \pcq]$, i.e., $S'_{X, uv} = (X \setminus \set{u, v}) \setminus CN_{[\tau, \tau + \pcq]}(u, v)$. And we will bound the number of vertices in each of the two sets, (which will in turn bound the number of possible choices for the two sets, and consequently bound the number of possible choices for $X$). 
    
    First, as $\pcq \geq \pngt + \pone + \ptwo$ and as $uv \notin E(G_{[\tau, \tau + \pcq]})$, by Lemma~\ref{lem:change-new}, we have $\card{CN_{[\tau, \tau + \pcq]}(u, v)} \leq \nbdbound$. Then, as $S_{X, uv} \subseteq CN_{[\tau, \tau + \pcq]}(u, v)$, we can conclude that the number of choices for $S_{X, uv}$ is at most $2^{\nbdbound}$. 
    Second, as $(X, [a, b])$ is a \ppx-plex and $u$ and $v$ are not adjacent to each other during   $[\tau, \tau + \pcq]$, there exist at most $k - 1$ vertices in $X \setminus \set{u, v}$ that are not adjacent to $v$ during $[\tau, \tau + \pcq]$, and at most $k - 1$ vertices in $X \setminus \set{u, v}$ that are not adjacent to $u$ during $[\tau, \tau + \pcq]$. Hence, there are at most $2(k - 1)$ vertices in $X \setminus \set{u, v}$ that are not adjacent to at least one of $u$ and $v$ during  $[\tau, \tau + \pcq]$; that is, $\card{S'_{X, uv}} \leq 2(k - 1)$. Therefore, the number of choices for $S'_{X, uv}$ is $\binom{n - 2}{2(k - 1)} \leq n^{2k - 2}$. Taken together, the number of choices for $X \setminus \set{u, v}$ is at most $2^{\nbdbound} \cdot n^{2k - 2}$. By summing over all possible choices\footnote{Here we also rely on the fact that for every $X$ and every $\tau$, there exists at most one time-interval $[a, b]$ such that $\tau \in [a, b]$ and $(X, [a, b])$ is a maximal \ppx-plex; the reasoning in Observation~\ref{obs:clique-misc} applies to \ppx-plexes as well.} for $u$ and $\tau$, we get that the number of maximal \ppx-plexes of type 3 is at most $\sum_{(u, \tau)} 2^{\nbdbound} \cdot n^{2k - 2} = 2^{\nbdbound} \cdot n^{2k - 2} \cdot n \cdot \lt$. 
    
    We now deal with maximal \ppx-plexes of type 4a. Consider such a \ppx-plex $(X, [a, b])$. Then $(X \setminus \set{v}, [a, b])$ is not vertex-maximal, and hence there exists a vertex $u \in V(G) \setminus X$ such that $((X \setminus \set{v}) \cup \set{u}, [a, b])$ is a \ppx-plex. Recall that by the definition of Type 4 \ppx-plexes, $v$ is adjacent to every vertex $w \in X \setminus \set{v}$ during every interval $[\tau, \tau +  \pcq] \subseteq [a, b]$. We can therefore conclude that there exists an interval $[\tau, \tau + \pcq]$ such that $uv \notin E(G_{[\tau, \tau + \pcq]})$, for otherwise $(X \cup \set{u}, [a, b])$ would be a \ppx-plex, a contradiction to the maximality of $(X, [a, b])$. Using arguments that are identical to those for the case of type 3, we can conclude that the number of choices for $X$ is at most $2^{\nbdbound} \cdot n^{k}$; the number of vertices in $X \setminus \set{v}$ that are not adjacent to $u$ during $[\tau, \tau + \pcq]$ is at most $k$, and the remaining vertices in $X \setminus \set{v}$ are adjacent to both $u$ and $v$ during $[\tau, \tau + \pcq]$. Again, by summing over all possible choices for $u$ and $\tau)$, the number of maximal \ppx-plexes of type 4a is at most $2^{\nbdbound} \cdot n^{k} \cdot n \cdot \lt$. 

    Now, type 4b. Consider such a \ppx-plex $(X, [a, b])$. Then $(X \setminus \set{v}, [a, b])$ is not time-maximal, and in particular, $(X \setminus \set{v}, [a - 1, b])$ is a \ppx-plex. Now, as $(X, [a - 1, b])$ is not a \ppx-plex, we can conclude that there exists a vertex $u \in X \setminus \set{v}$ such that $uv \notin E(G_{[a - 1, a - 1 + \pcq]})$. Then, by Lemma~\ref{lem:change-new}, $\card{CN_{[a - 1, a + \pcq]}} \leq \nbdboundplus$. Again, using nearly identical arguments as in the case of type 4a, the number of choices for $X$ is at most $2^{\nbdboundplus} \cdot n^k$; the number of vertices in $X \setminus \set{v}$ that are not adjacent to $u$ during $[a, a + \pcq]$ is at most $k$, and the remaining vertices in $X \setminus \set{v}$ are adjacent to both $u$ and $v$ during $[a, a + \pcq] \subseteq [a - 1, a + \pcq]$, and hence their number is at most $\card{CN_{[a - 1, a + \pcq]}(u, v)} \leq \nbdboundplus$. By summing over all possible choices for $u$ and $a$, the number of maximal \ppx-cliques of type 4b is at most $2^{\nbdbound} \cdot n^{k} \cdot n \cdot \lt$. The case of type 4c is symmetric, and thus the number of such maximal \ppx-plexes is also at most $2^{\nbdbound} \cdot n^{k} \cdot n \cdot \lt$. 

    Thus the number maximal \ppx-plexes of Types 3, 4a, 4b and 4c is at most 
    \begin{align*}
        {}& \underbrace{2^{\nbdbound} \cdot n^{2k - 2} \cdot n \cdot \lt}_{\text{Type 3}} ~~~~ + \\ 
        {}& ~~~~~~~~~~~~~~ \underbrace{3 \cdot 2^{\nbdboundplus} \cdot n^{k} \cdot n \cdot \lt}_{\text{Types 4a, 4b and 4c}},
    \end{align*}
    which is at most $4 \cdot 2^{\nbdboundplus} \cdot n^{\max\set{2k - 2, k}} \cdot n \cdot \lt$. Therefore, the number of maximal \ppx-plexes in $\ca{G}$ is at most 
    \begin{align*}
        F(\rho, \pcq, k, n, \lt) &\leq F(\rho, \pcq, k, n - 1, \lt) ~~~ + \\
        {}& ~~~~~~ 4 \cdot 2^{\nbdboundplus} \cdot n^{\max\set{2k - 2, k}} \cdot n \cdot \lt \\
        &\leq 4 \cdot 2^{\nbdboundplus} \cdot n^{\max\set{2k - 2, k}} \cdot n^2 \cdot \lt \\
        &= \mplexbound. 
    \end{align*}

    As in the case of Theorem~\ref{thm:weak-maximal-cliques-bound}, the same arguments would work even if replace $\tc$-closure with weak \tg-closure. We can thus conclude that the number of maximal \ppx-plexes is bounded by \weakmplexbound. 

    Let us now consider \pdc-defective cliques. The proof is identical to  the case of \ppx-plexes, with one minor difference. When bounding the number of type 3 \pdc-defective cliques, the number of vertices that are not adjacent to at least one of $u$ and $v$ is at most $k - 1$,  i.e., $\card{S'_{X, uv}} \leq k - 1$, (rather than $2(k - 1)$, as was the case for \ppx-plexes). And hence  the number of maximal \pdc-defective cliques of type 3 is at most $2^{\nbdbound} \cdot n^{k - 1} \cdot n \cdot \lt$. We can thus conclude that any $\eta$-unstable, \tc-closed temporal graph with $n$ vertices and lifetime $\lt$ has at most \mdcbound\ maximal \pdc-defective cliques. Again, replacing \tc-closure with weak \tg-closure, we get that any $\eta$-unstable, weakly \tg-closed temporal graph with $n$ vertices and lifetime $\lt$ has at most \weakmdcbound\ maximal \pdc-defective cliques. 
    
    This completes the proof of the theorem. 
\end{proof}

\section{Concluding Remarks}

We introduced the definition of \tc-closed temporal graphs, which formalizes the triadic closure property of social networks. Our empirical results suggest that \tc-closure number and weak \tg-closure number could be meaningful parameters in the study of real-world networks. But further evaluation of large real-world temporal networks is necessary to confirm this. Our theoretical results demonstrate the usefulness of these parameters in designing algorithms with provable worst-case guarantees. We hope that this parameter could be further exploited, and that \tc-closure number will prove to be just as useful as its static counterpart in designing algorithms for a variety of problems.  
While local (or pairwise) $\eta$-instability is one sufficient condition that, when coupled with \tc-closure, can yield a non-trivial bound for the number of maximal \pcq-cliques, it will be worth investigating whether there are more  reasonable conditions that can yield a similar bound. We must also add a note of caution here. These parameters are all rather crude abstractions of properties exhibited by real-world networks; they are based on how adjacencies behave and evolve over time in an idealized temporal social  network. They can nonetheless illuminate the structure of real-world networks. Also, less-than-realistic abstraction is often the price we must pay for algorithms with provable worst-case guarantees. 
It is not our case that these parameters will be sufficiently small for practical purposes for all real-world temporal  networks. Practical applications will  require further refinement of these parameters, and we hope that our work will trigger such inquiries. 
\section*{Acknowledgments}
This work was done while Tom Davot was at the University of Glasgow, supported by EPSRC grant EP/T004878/1. Jessica Enright is supported by EPSRC grant EP/T004878/1. Jayakrishnan Madathil and Kitty Meeks are supported by EPSRC grants EP/T004878/1 and  EP/V032305/1. 

\bibliography{arxiv-refs-condensed}

\begin{thebibliography}{29}
\providecommand{\natexlab}[1]{#1}

\bibitem[{Akrida et~al.(2021)Akrida, Mertzios, Spirakis, and
  Raptopoulos}]{DBLP:journals/jcss/AkridaMSR21}
Akrida, E.~C.; Mertzios, G.~B.; Spirakis, P.~G.; and Raptopoulos, C.~L. 2021.
\newblock The temporal explorer who returns to the base.
\newblock \emph{J. Comput. Syst. Sci.}, 120: 179--193.

\bibitem[{Behera et~al.(2022)Behera, Husic, Jain, Roughgarden, and
  Seshadhri}]{DBLP:conf/innovations/BeheraH0RS22}
Behera, B.; Husic, E.; Jain, S.; Roughgarden, T.; and Seshadhri, C. 2022.
\newblock {FPT} Algorithms for Finding Near-Cliques in c-Closed Graphs.
\newblock In \emph{{ITCS}}, volume 215 of \emph{LIPIcs}, 17:1--17:24. Schloss
  Dagstuhl - Leibniz-Zentrum f{\"{u}}r Informatik.

\bibitem[{Bentert et~al.(2019)Bentert, Himmel, Molter, Morik, Niedermeier, and
  Saitenmacher}]{DBLP:journals/jea/BentertHMMNS19}
Bentert, M.; Himmel, A.; Molter, H.; Morik, M.; Niedermeier, R.; and
  Saitenmacher, R. 2019.
\newblock Listing All Maximal \emph{k}-Plexes in Temporal Graphs.
\newblock \emph{{ACM} J. Exp. Algorithmics}, 24(1): 1.13:1--1.13:27.

\bibitem[{Bumpus and Meeks(2023)}]{DBLP:journals/algorithmica/BumpusM23}
Bumpus, B.~M.; and Meeks, K. 2023.
\newblock Edge Exploration of Temporal Graphs.
\newblock \emph{Algorithmica}, 85(3): 688--716.

\bibitem[{Casteigts et~al.(2021)Casteigts, Himmel, Molter, and
  Zschoche}]{DBLP:journals/algorithmica/CasteigtsHMZ21}
Casteigts, A.; Himmel, A.; Molter, H.; and Zschoche, P. 2021.
\newblock Finding Temporal Paths Under Waiting Time Constraints.
\newblock \emph{Algorithmica}, 83(9): 2754--2802.

\bibitem[{Chakrabarti and Faloutsos(2006)}]{DBLP:journals/csur/ChakrabartiF06}
Chakrabarti, D.; and Faloutsos, C. 2006.
\newblock Graph mining: Laws, generators, and algorithms.
\newblock \emph{{ACM} Comput. Surv.}, 38(1): 2.

\bibitem[{Enright et~al.(2024)Enright, Hand, Larios{-}Jones, and
  Meeks}]{DBLP:journals/corr/abs-2404-19453}
Enright, J.~A.; Hand, S.~D.; Larios{-}Jones, L.; and Meeks, K. 2024.
\newblock Structural Parameters for Dense Temporal Graphs.
\newblock \emph{CoRR}, abs/2404.19453.

\bibitem[{Fluschnik et~al.(2020)Fluschnik, Molter, Niedermeier, Renken, and
  Zschoche}]{DBLP:conf/birthday/FluschnikMNRZ20}
Fluschnik, T.; Molter, H.; Niedermeier, R.; Renken, M.; and Zschoche, P. 2020.
\newblock As Time Goes By: Reflections on Treewidth for Temporal Graphs.
\newblock In \emph{Treewidth, Kernels, and Algorithms}, volume 12160 of
  \emph{Lecture Notes in Computer Science}, 49--77. Springer.

\bibitem[{Fox et~al.(2018)Fox, Roughgarden, Seshadhri, Wei, and
  Wein}]{DBLP:conf/icalp/FoxRSWW18}
Fox, J.; Roughgarden, T.; Seshadhri, C.; Wei, F.; and Wein, N. 2018.
\newblock Finding Cliques in Social Networks: {A} New Distribution-Free Model.
\newblock In \emph{{ICALP}}, volume 107 of \emph{LIPIcs}, 55:1--55:15. Schloss
  Dagstuhl - Leibniz-Zentrum f{\"{u}}r Informatik.

\bibitem[{Fox et~al.(2020)Fox, Roughgarden, Seshadhri, Wei, and
  Wein}]{DBLP:journals/siamcomp/FoxRSWW20}
Fox, J.; Roughgarden, T.; Seshadhri, C.; Wei, F.; and Wein, N. 2020.
\newblock Finding Cliques in Social Networks: {A} New Distribution-Free Model.
\newblock \emph{{SIAM} J. Comput.}, 49(2): 448--464.

\bibitem[{Gelardi et~al.(2020)Gelardi, Godard, Paleressompoulle, Claidiere, and
  Barrat}]{gelardi2020measuring}
Gelardi, V.; Godard, J.; Paleressompoulle, D.; Claidiere, N.; and Barrat, A.
  2020.
\newblock Measuring social networks in primates: wearable sensors versus direct
  observations.
\newblock \emph{Proceedings of the Royal Society A}, 476(2236): 20190737.

\bibitem[{G{\'e}nois and Barrat(2018)}]{Genois2018}
G{\'e}nois, M.; and Barrat, A. 2018.
\newblock Can co-location be used as a proxy for face-to-face contacts?
\newblock \emph{EPJ Data Science}, 7(1): 11.

\bibitem[{Haag et~al.(2022)Haag, Molter, Niedermeier, and
  Renken}]{DBLP:journals/dam/HaagMNR22}
Haag, R.; Molter, H.; Niedermeier, R.; and Renken, M. 2022.
\newblock Feedback edge sets in temporal graphs.
\newblock \emph{Discret. Appl. Math.}, 307: 65--78.

\bibitem[{Hermelin et~al.(2023)Hermelin, Itzhaki, Molter, and
  Niedermeier}]{DBLP:journals/tcs/HermelinIMN23}
Hermelin, D.; Itzhaki, Y.; Molter, H.; and Niedermeier, R. 2023.
\newblock Temporal interval cliques and independent sets.
\newblock \emph{Theor. Comput. Sci.}, 961: 113885.

\bibitem[{Himmel et~al.(2017)Himmel, Molter, Niedermeier, and
  Sorge}]{DBLP:journals/snam/HimmelMNS17}
Himmel, A.; Molter, H.; Niedermeier, R.; and Sorge, M. 2017.
\newblock Adapting the {B}ron-{K}erbosch algorithm for enumerating maximal
  cliques in temporal graphs.
\newblock \emph{Soc. Netw. Anal. Min.}, 7(1): 35:1--35:16.

\bibitem[{Kanesh et~al.(2023)Kanesh, Madathil, Roy, Sahu, and
  Saurabh}]{DBLP:journals/siamdm/KaneshMRSS23}
Kanesh, L.; Madathil, J.; Roy, S.; Sahu, A.; and Saurabh, S. 2023.
\newblock Further Exploiting \emph{c}-Closure for {FPT} Algorithms and Kernels
  for Domination Problems.
\newblock \emph{{SIAM} J. Discret. Math.}, 37(4): 2626--2669.

\bibitem[{Kempe, Kleinberg, and Kumar(2002)}]{DBLP:journals/jcss/KempeKK02}
Kempe, D.; Kleinberg, J.~M.; and Kumar, A. 2002.
\newblock Connectivity and Inference Problems for Temporal Networks.
\newblock \emph{J. Comput. Syst. Sci.}, 64(4): 820--842.

\bibitem[{Kiti et~al.(2016)Kiti, Tizzoni, Kinyanjui, Koech, Munywoki, Meriac,
  Cappa, Panisson, Barrat, Cattuto et~al.}]{kiti2016quantifying}
Kiti, M.~C.; Tizzoni, M.; Kinyanjui, T.~M.; Koech, D.~C.; Munywoki, P.~K.;
  Meriac, M.; Cappa, L.; Panisson, A.; Barrat, A.; Cattuto, C.; et~al. 2016.
\newblock Quantifying social contacts in a household setting of rural Kenya
  using wearable proximity sensors.
\newblock \emph{EPJ data science}, 5: 1--21.

\bibitem[{Koana et~al.(2022)Koana, Komusiewicz, Nichterlein, and
  Sommer}]{DBLP:conf/stacs/KoanaKNS22}
Koana, T.; Komusiewicz, C.; Nichterlein, A.; and Sommer, F. 2022.
\newblock Covering Many (Or Few) Edges with k Vertices in Sparse Graphs.
\newblock In \emph{{STACS}}, volume 219 of \emph{LIPIcs}, 42:1--42:18. Schloss
  Dagstuhl - Leibniz-Zentrum f{\"{u}}r Informatik.

\bibitem[{Koana, Komusiewicz, and
  Sommer(2022)}]{DBLP:journals/siamdm/KoanaKS22}
Koana, T.; Komusiewicz, C.; and Sommer, F. 2022.
\newblock Exploiting {\textdollar}c{\textdollar}-Closure in Kernelization
  Algorithms for Graph Problems.
\newblock \emph{{SIAM} J. Discret. Math.}, 36(4): 2798--2821.

\bibitem[{Koana, Komusiewicz, and
  Sommer(2023{\natexlab{a}})}]{DBLP:journals/algorithmica/KoanaKS23a}
Koana, T.; Komusiewicz, C.; and Sommer, F. 2023{\natexlab{a}}.
\newblock Computing Dense and Sparse Subgraphs of Weakly Closed Graphs.
\newblock \emph{Algorithmica}, 85(7): 2156--2187.

\bibitem[{Koana, Komusiewicz, and
  Sommer(2023{\natexlab{b}})}]{DBLP:journals/algorithmica/KoanaKS23}
Koana, T.; Komusiewicz, C.; and Sommer, F. 2023{\natexlab{b}}.
\newblock Essentially Tight Kernels for (Weakly) Closed Graphs.
\newblock \emph{Algorithmica}, 85(6): 1706--1735.

\bibitem[{Koana and Nichterlein(2021)}]{DBLP:journals/dam/KoanaN21}
Koana, T.; and Nichterlein, A. 2021.
\newblock Detecting and enumerating small induced subgraphs in c-closed graphs.
\newblock \emph{Discret. Appl. Math.}, 302: 198--207.

\bibitem[{Lokshtanov and Surianarayanan(2021)}]{DBLP:conf/fsttcs/LokshtanovS21}
Lokshtanov, D.; and Surianarayanan, V. 2021.
\newblock Dominating Set in Weakly Closed Graphs is Fixed Parameter Tractable.
\newblock In \emph{{FSTTCS}}, volume 213 of \emph{LIPIcs}, 29:1--29:17. Schloss
  Dagstuhl - Leibniz-Zentrum f{\"{u}}r Informatik.

\bibitem[{Mertzios et~al.(2023)Mertzios, Molter, Niedermeier, Zamaraev, and
  Zschoche}]{DBLP:journals/jcss/MertziosMNZZ23}
Mertzios, G.~B.; Molter, H.; Niedermeier, R.; Zamaraev, V.; and Zschoche, P.
  2023.
\newblock Computing maximum matchings in temporal graphs.
\newblock \emph{J. Comput. Syst. Sci.}, 137: 1--19.

\bibitem[{Moon and Moser(1965)}]{moon1965cliques}
Moon, J.~W.; and Moser, L. 1965.
\newblock On cliques in graphs.
\newblock \emph{Israel journal of Mathematics}, 3(1): 23--28.

\bibitem[{Ozella et~al.(2021)Ozella, Paolotti, Lichand, Rodr{\'\i}guez, Haenni,
  Phuka, Leal-Neto, and Cattuto}]{ozella2021using}
Ozella, L.; Paolotti, D.; Lichand, G.; Rodr{\'\i}guez, J.~P.; Haenni, S.;
  Phuka, J.; Leal-Neto, O.~B.; and Cattuto, C. 2021.
\newblock Using wearable proximity sensors to characterize social contact
  patterns in a village of rural Malawi.
\newblock \emph{EPJ Data Science}, 10(1): 46.

\bibitem[{Takaguchi(2015)}]{takaguchi2015analyzing}
Takaguchi, T. 2015.
\newblock Analyzing dynamical social interactions as temporal networks.
\newblock \emph{IFAC-PapersOnLine}, 48(18): 169--174.

\bibitem[{Viard, Latapy, and Magnien(2016)}]{DBLP:journals/tcs/ViardLM16}
Viard, T.; Latapy, M.; and Magnien, C. 2016.
\newblock Computing maximal cliques in link streams.
\newblock \emph{Theor. Comput. Sci.}, 609: 245--252.

\end{thebibliography}

\end{document}